\documentclass[envcountsect,envcountsame,runningheads,oribibl]{llncs}

\newcommand\damien[1]{\marginpar{$\dagger$}}

\newcommand{\keywords}[1]{\par\addvspace\baselineskip
\noindent\keywordname\enspace\ignorespaces#1}

\usepackage{lncs-preamble}
\usepackage[utf8]{inputenc}
\usepackage{
enumitem,
mathpartir,
amsmath,amsfonts,
amssymb,%
stmaryrd,cmll,
graphicx,empheq,wrapfig,picinpar,
yhmath,mathabx
} %
\intextsep=0cm 
\usepackage[mathcal]{euscript}
\usepackage{fullshort}
\setboolean{fullpaper}{true}



\newlist{myaxioms}{enumerate}{10}
\setlist[myaxioms,1]{label=(P\arabic*)}
\setlist[myaxioms]{resume}

\newlist{enumeratei}{enumerate}{10}
\setlist[enumeratei]{label=\emph{(\roman*)}}

\hyphenation{co-mo-noid hir-scho-witz}
\DeclareMathOperator{\ob}{ob}

\DeclareMathOperator{\dom}{dom}
\DeclareMathOperator{\cod}{cod}
\DeclareMathOperator{\domv}{dom_{\mathit{v}}}
\DeclareMathOperator{\codv}{cod_{\mathit{v}}}
\DeclareMathOperator{\domh}{dom_{\mathit{h}}}
\DeclareMathOperator{\codh}{cod_{\mathit{h}}}

\newcommand{\restr}[2]{#1_{|#2}}


\newcommand{\bigcat}[1]{\ensuremath{\mathsf{#1}}}
\newcommand{\maji}[1]{\ensuremath{\mathbb{#1}}}

\newcommand{\commentthis}[1]{}

\newcommand{\Psh}[1]{\widehat{#1}}

\newcommand{\Cospan}[1]{\bigcat{Cospan} (#1)}

\newcommand{\into}{\hookrightarrow}

\newcommand{\ot}{\leftarrow}
\newcommand{\xto}[1]{\xrightarrow{#1}}
\newcommand{\xot}[1]{\xleftarrow{#1}}

\newcommand{\xinto}[1]{\xhookrightarrow{#1}}





\renewcommand{\hat}[1]{\widehat{#1}}


\def\framed{%
\setbox0=\vbox\bgroup%
\advance\hsize by -2\fboxsep\advance\hsize by -2\fboxrule%
\linewidth=\hsize%
}
\def\endframed{%
\egroup\noindent\framebox[\textwidth]{\box0}\vspace*{1mm}}

\usepackage{tikz,tikz-cats}
\tikzset{todim/.style = {decoration={markings, mark=at position .5 with %
      {\draw (-1pt,-1pt) rectangle (1pt,1pt);}},postaction={decorate}}}

\usepackage{color}
\definecolor{dkgreen}{rgb}{0,0.2,0}

\newcommand{\A}{\maji{A}}

\newcommand{\B}{\maji{B}}
\newcommand{\Beh}[1]{\Behfun_{#1}}
\newcommand{\Behfun}{\bigcat{B}}

\newcommand{\C}{\maji{C}}

\newcommand{\tick}{\daimon}

\newcommand{\Chat}{\hat{\maji{C}}}
\newcommand{\Chatf}{\Pshf{\maji{C}}}

\newcommand{\D}{\maji{D}}

\renewcommand{\DH}{\D_{H}}

\newcommand{\Dh}{\D_h}

\newcommand{\Dv}{\D_v}

\newcommand{\ccs}{\mathit{CCS}}
\newcommand{\Dccs}{\maji{D}^{\scriptscriptstyle \ccs}}

\newcommand{\Dccsh}{\Dccs_h}
\newcommand{\Dccsv}{\Dccs_v}

\newcommand{\DI}{\I}

\newcommand{\E}{\maji{E}}

\newcommand{\EVi}{\E^{\V}}

\newcommand{\F}{\maji{F}}

\newcommand{\G}{\maji{G}}

\newcommand{\I}{\maji{I}}

\newcommand{\LLL}{\mathcal{L}}

\newcommand{\griso}[1]{[#1]}
\newcommand{\MMMB}{\griso{\B}}

\newcommand{\V}{\maji{V}}

\newcommand{\yoneda}{\bigcat{y}}

\renewcommand{\SS}{\bigcat{S}}
\newcommand{\SSS}{\mathcal{S}}

\newcommand{\SSn}{\SS_{[n]}}

\newcommand{\SSX}{\SS_X}
\newcommand{\SSY}{\SS_Y}

\newcommand{\TT}{\bigcat{T}} %
\newcommand{\TTT}{\mathcal{T}} %

\newcommand{\Set}{\bigcat{Set}}
\newcommand{\set}{\bigcat{set}}
\newcommand{\ford}{\bigcat{ford}}

\newcommand{\OPsh}[1]{\wideparen{#1}}
\newcommand{\Pshf}[1]{\widehat{#1}^{{}_f}}

\newcommand{\Gph}{\bigcat{Gph}}


\newcommand{\Nat}{\mathbb{N}} 

\newcommand{\para}{\mathbin{\mid}}

\newcommand{\translfun}{\llbracket - \rrbracket}
\newcommand{\Transl}[1]{\llparenthesis #1 \rrparenthesis}
\newcommand{\Translfun}{\Transl{-}}
\newcommand{\rond}{\circ}
\newcommand{\vrond}{\mathbin{\scriptstyle \bullet}}

\newcommand{\id}{\mathit{id}}
\newcommand{\iso}{\cong}

\newcommand{\ens}[1]{\{ #1 \}}

\newcommand{\aalt}{\mathrel{|}}

\newcommand{\op}[1]{#1^{\mathit{op}}}

\newcommand{\Gam}{\Gamma}

\newcommand{\wbisim}{\mathrel{\simeq}}

\newcommand{\wbisima}{\mathrel{\simeq_A}}

\newcommand{\faireq}{\mathrel{\sim_f}}

\newcommand{\faireqs}{\mathrel{\sim_{f,s}}}

\newcommand{\bureaucratic}[1]{}

\newcommand{\abar}{\overline{a}}

\newcommand{\sender}{\epsilon}
\newcommand{\receiver}{\rho}

\newcommand{\iotapos}[1]{o_{#1}}
\newcommand{\iotaneg}[1]{\iota_{#1}}
\newcommand{\iotaposni}{\iotapos{n,i}}
\newcommand{\iotaposmj}{\iotapos{m,j}}

\newcommand{\iotanegni}{\iotaneg{n,i}}
\newcommand{\iotanegmj}{\iotaneg{m,j}}

\newcommand{\paraof}[1]{\pi_{#1}}

\newcommand{\paralof}[1]{\pi^l_{#1}}
\newcommand{\pararof}[1]{\pi^r_{#1}}
\newcommand{\paran}{\paraof{n}}
\newcommand{\paraln}{\paralof{n}}
\newcommand{\pararn}{\pararof{n}}
\newcommand{\paragam}{\paraof{\Gam}}
\newcommand{\paralgam}{\paralof{\Gam}}
\newcommand{\parargam}{\pararof{\Gam}}

\newcommand{\nuof}[1]{\nu_{#1}}
\newcommand{\nun}{\nuof{n}}
\newcommand{\nugam}{\nuof{\Gam}}
\newcommand{\tickof}[1]{\tick_{#1}}
\newcommand{\tickn}{\tickof{n}}
\newcommand{\tickgam}{\tickof{\Gam}}
\newcommand{\tauof}[4]{\tau_{#1,#2,#3,#4}}
\newcommand{\taunimj}{\tauof{n}{i}{m}{j}}
\newcommand{\taumjni}{\tauof{m}{j}{n}{i}}

\newcommand{\Pl}{\mathrm{Pl}}

\newcommand{\lts}{\textsc{lts}}
\newcommand{\anlts}{an \lts{}}
\newcommand{\ltss}{\lts{s}}
\newcommand{\Lts}{\textsc{Lts}}
\newcommand{\Ltss}{\Lts{s}}

\newcommand\independent{\protect\mathpalette{\protect\independenT}{\perp}} 
\def\independenT#1#2{\mathrel{\rlap{$#1#2$}\mkern2mu{#1#2}}} 
\newcommand{\bbot}{\mathord{\independent}}

\newcommand{\daimon}{\heartsuit}

\newcommand{\state}{\sigma}

\newcommand{\new}{\nu}

\renewcommand{\with}[1]{\langle #1 \rangle}

\newcommand{\exta}[2]{\overline{#2}}
\newcommand{\extafun}[1]{\overline{(-)}}

\newcommand{\pbang}[2]{#1_{!}(#2)}

\newcommand{\pbkchi}[1]{#1^\LLL}

\newcommand{\SSSL}{\pbkchi{\SSS}}

\newcommand{\rc}[1]{#1^\A}
\newcommand{\SSSA}{\rc{\SSS}}
\newcommand{\TTTA}{\rc{\TTT}}
\newcommand{\SSSAccs}{\rc{\SSS_{\Dccs}}}
\newcommand{\TTTAccs}{\rc{\TTT_{\Dccs}}}

\tikzset{history/.style = {-open triangle 45}}

\newcommand{\vdashdefinite}{\vdash_{\mathsf{D}}}

\renewcommand{\EVi}{\V}

\begin{document}

\mainmatter  

\title{Full abstraction for fair testing in CCS}


\author{Tom Hirschowitz%
\thanks{Partially
    funded by the French ANR projet blanc ``Formal Verification of
    Distributed Components'' PiCoq ANR 2010 BLAN 0305 01 and CNRS PEPS
    CoGIP.}%
}
\authorrunning{T.\ Hirschowitz}

\institute{CNRS and Universit\'e de Savoie}

\maketitle

\begin{abstract}
In previous work with Pous, we defined a semantics for
  CCS which may both be viewed as an innocent presheaf semantics and
  as a concurrent game semantics.  It is here proved that a
  behavioural equivalence induced by this semantics on CCS processes
  is fully abstract for fair testing equivalence.

  The proof relies on a new algebraic notion called \emph{playground},
  which represents the `rule of the game'. From any playground, two
  languages, equipped with labelled transition systems, are derived,
  as well as a strong, functional bisimulation between them.

  \keywords{Programming languages; categorical semantics; presheaf
    semantics; game semantics; concurrency; process algebra.}
\end{abstract}

\section{Introduction}
\subsubsection*{Motivation and previous work}
Innocent game semantics, invented by Hyland and
Ong~\cite{DBLP:journals/iandc/HylandO00}, led to fully abstract models
for a variety of functional languages, where programs are interpreted
as strategies in a game.  Presheaf
models~\cite{DBLP:conf/lics/JoyalNW93,DBLP:conf/csl/CattaniW96} were
introduced by Joyal et al. as a semantics for process algebras, in
particular Milner's CCS~\cite{Milner80}.  Previous work with
Pous~\cite{2011arXiv1109.4356H} (HP) proposes a semantics for CCS,
which reconciles these apparently very different approaches.  Briefly,
(1) on the one hand, we generalise innocent game semantics to both
take seriously the possibility of games with more than two players and
consider strategies which may accept plays in more than one way; (2)
on the other hand, we refine presheaf models to take parallel
composition more seriously. This leads to a model of CCS which may
both be seen as a concurrent game semantics, and as an innocent
presheaf model, as we now briefly recall.

To see that presheaf models are a concurrent, non-innocent
variant of game semantics, recall that the base category, say $\C$,
for such a presheaf model typically has as objects sequences of
labels, or configurations in event structures, morphisms being given
by prefix inclusion. Such objects may be understood as plays in some
game. Now, in standard game semantics, a strategy is a prefix-closed
(non-empty) set of plays. Unfolding the definition, this is the same
as a functor $\op\C \to 2$, where $2$ is the poset category $0 \leq
1$: the functor maps a play to $1$ when it is accepted by the
strategy, and to $0$ otherwise.  It is known since Harmer and
McCusker~\cite{DBLP:conf/lics/HarmerM99} that this notion of
strategy does not easily adapt to non-determinism or concurrency.
Presheaf semantics only slightly generalises it by allowing strategies
to accept a play in several ways.  A strategy $S$ now maps each play
$p$ to a \emph{set} $S (p)$. The play is accepted when $S (p)$ is
non-empty, and, because there are then no functions $S (p) \to
\emptyset$, being accepted remains a prefix-closed property of plays.
The passage from $2$ to more general sets allows to express
branching-time semantics.

This links presheaf models with game models, but would be of little
interest without the issue of \emph{innocence}. Game models, indeed,
do not always accept \emph{any} prefix-closed set of plays $S$ as a
strategy: they demand that any choice of move in $S$ depends only on
its \emph{view}. E.g., consider the CCS process $P = (a | (b \oplus
c))$, where $\oplus$ denotes internal choice, and a candidate strategy
accepting the plays $\epsilon, (a), (b),(c), (ab),$ but not
$(ac)$. This strategy refuses to choose $c$ after $a$ has been
played. Informally, there are two players here, one playing $a$ and
the other playing $b \oplus c$; the latter should have no means to
know whether $a$ has been played or not.  We want to rule out this
strategy on the grounds that it is not innocent.  

Our technical solution for doing so is to refine the notion of play,
making the number of involved players more explicit. Plays still form
a category, but they admit a subcategory of \emph{views}, which
represent a single player's possible perceptions of the game. This
leads us to two equivalent categories of strategies.  In the first,
strategies are presheaves on views.  In the second category,
strategies are certain presheaves on arbitrary plays, satisfying an
innocence condition.  Parallel composition, in the game semantical
sense, is best understood in the former category: it merely amounts to
copairing. Parallel composition, in the CCS sense, which in standard
presheaf models is a complex operation based on some labelling of
transitions or events, is here just a move in the game.  The full
category of plays is necessary for understanding the global behaviour
of strategies. It is in particular needed to define our semantic
variant of fair testing equivalence, described below. One may think of
presheaves on views as a syntax, and of innocent presheaves on plays
as a semantics.  The combinatorics of passing from local (views) to
global (arbitrary plays) are dealt with by right Kan extension.

\subsubsection*{Discussion of main results}
In this paper, we further study the semantics of HP, to demonstrate
how close it is to operational semantics. For this, we provide two
results. The most important, in the author's view, is full abstraction
w.r.t.\ \emph{fair testing semantics}. But the second result might be
considered more convincing by many: it establishes that our
semantics is fully abstract w.r.t.\ weak bisimilarity.
The reason why it is here considered less important is that it relies
on something external to the model itself, namely \anlts{} for
strategies, constructed in an \emph{ad hoc} way. 
Considering that a process calculus is defined by its reduction
semantics, rather than by its possibly numerous \ltss{}, testing
equivalences, which rely on the former, are more intrinsic than
various forms of bisimilarity.
 
 

Now, why consider fair testing among the many testing equivalences?
First of all, let us mention that we could probably generalise our
result to any reasonable testing equivalence.  Any testing equivalence
relies on a `testing predicate' $\bot$.  E.g., for fair testing, it is
the set of processes from which any unsuccessful, finite reduction
sequence extends to a successful one.  We conjecture that for any
other predicate $\bot'$, if $\bot'$ is stable under weak bisimilarity,
i.e, $P \wbisim Q \in \bot'$ implies $P \in \bot'$, then we may
interpret the resulting equivalence in terms of strategies, and get a
fully abstract semantics.  However, this paper is already quite
complicated, and pushes generalisation rather far in other respects
(see below).  We thus chose to remain concrete about the considered
equivalence.  It was then natural to consider fair testing, as it is
both one of the most prominent testing equivalences, and one of the
finest.  It was introduced independently by Natarajan and
Cleaveland~\cite{DBLP:conf/icalp/NatarajanC95}, and by Brinksma et
al.~\cite{DBLP:conf/concur/BrinksmaRV95,DBLP:journals/iandc/RensinkV07}
(under the name of \emph{should} testing in the latter paper), with
the aim of reconciling the good properties of observation
congruence~\cite{Milner89} w.r.t.\ divergence, and the good properties
of previous testing equivalences~\cite{DBLP:journals/tcs/NicolaH84}
w.r.t.\ choice. Typically, $a.b + a.c$ and $a.(b \oplus c)$ (where $+$
denotes guarded choice and $\oplus$ denotes internal choice) are not
observation congruent, which is perceived as excessive discriminating
power of observation congruence. Conversely, $(!\tau) \para a$ and $a$
are not must testing equivalent, which is perceived as excessive
discriminating power of must testing equivalence. Fair testing
rectifies both defects, and has been the subject of further
investigation, as summarised, e.g., in Cacciagrano et
al.~\cite{DBLP:journals/corr/abs-0904-2340}.

\subsubsection*{Overview}
We now give a bit more detail on the contents, warning the reader that
this paper is only an extended abstract, and that more technical
details may be found in a (submitted) long
version~\cite{HirschoDoubleCats}.  After recalling the game from HP in
Section~\ref{sec:game}, as well as strategies and our semantic fair
testing equivalence $\faireq$ in Section~\ref{sec:strats}, we prove
that the translation $\Translfun$ of HP from CCS to strategies is such
that $P \faireqs Q$ iff $\Transl{P} \faireq \Transl{Q}$, where
$\faireqs$ is standard fair testing equivalence
(Theorem~\ref{thm:final}).

Our first attempts at proving this where obscured by easy, yet lengthy
case analyses over moves. This prompted the search for a way of
factoring out what holds `for all moves'. The result is the notion of
\emph{playground}, surveyed in Section~\ref{subsec:playgrounds}. It is
probably not yet in a mature state, and hopefully the axioms will
simplify in the future.  We show how the game recalled above organises
into such a playground $\Dccs$.  We then develop the theory in
Section~\ref{subsec:ltss}, defining, for any playground $\D$, two
\ltss{}, $\TTT_\D$ and $\SSS_\D$, of \emph{process terms} and
\emph{strategies}, respectively, over an alphabet $\F_\D$.  We then
define a map $\translfun \colon \TTT_\D \to \SSS_\D$ between them,
which we prove is a strong bisimulation.

Returning to the case of CCS in Section~\ref{subsec:cob}, we obtain
that $\SSS_{\Dccs}$ indeed has strategies as states, and that
$\faireq$ may be characterised in terms of this \lts{}. Furthermore,
unfolding the definition of $\TTT_{\Dccs}$, we find that its states
are terms in a language containing CCS. So, we have maps $\ob(\ccs)
\xinto{\theta} \ob (\TTT_{\Dccs}) \xto{\translfun} \ob
(\SSS_{\Dccs}),$ where $\ob$ takes the set of vertices, and with
$\translfun \rond \theta = \Translfun$.  Now, a problem is that $\ccs$
and the other two are \ltss{} on different alphabets, respectively
$\A$ and $\F_{\Dccs}$. We thus define morphisms $\A \xot{\xi} \LLL
\xto{\chi} \F_{\Dccs}$ and obtain by successive change of base
(pullback when rewinding an arrow, postcomposition when following one)
a strong bisimulation $\translfun \colon \TTTA_{\Dccs} \to
\SSSA_{\Dccs}$ over $\A$.  We then prove that $\theta$, viewed as a
map $\ob (\ccs) \into \ob (\TTTA_{\Dccs})$, is included in weak
bisimilarity, which yields for all $P$, $P \wbisim_\A \Transl{P}$
(Corollary~\ref{cor:wbisim}).  Finally, drawing inspiration from
Rensink et al.~\cite{DBLP:journals/iandc/RensinkV07}, we prove that
$\ccs$ and $\SSSA_{\Dccs}$ both have enough $\A$-\emph{trees}, in a
suitable sense, and that this, together with
Corollary~\ref{cor:wbisim}, entails the main result.

\subsubsection*{Related work}
Trying to reconcile two mainstream approaches to denotational
semantics, we have designed a (first version of a) general framework
aiming at an effective theory of programming languages. Other such
frameworks
exist~\cite{DBLP:conf/lics/Nipkow91,PlotkinSOS,plotkin:turi:bialgebraic,DBLP:conf/birthday/GadducciM00,Bruni:ccdc,DBLP:conf/fossacs/BonsangueRS09,HIRSCHOWITZ:2010:HAL-00540205:2,DBLP:conf/wollic/Ahrens12},
but most of them, with the notable exception of Kleene coalgebra,
attempt to organise the traditional techniques of syntax with variable
binding and reduction rules into some algebraic structure.  Here, as
in Kleene coalgebra, syntax and its associated \lts{} are derived
notions.  Our approach may thus be seen as an extension of Kleene
coalgebra to an innocent/multi-player setting, yet ignoring
quantitative aspects.

In another sense of the word `framework', recent work of Winskel and
colleagues~\cite{RideauW} investigates a general notion of concurrent
game, based on earlier work by Melli\`es~\cite{Mellies04}.  In our
approach, the idea is that each programming language is interpreted as
a playground, and that morphisms of playgrounds denote translations
between languages.  Winskel et al., instead, construct a (large) bicategory, into which
each programming language should embed. Beyond this crucial
difference, both approaches use presheaves and factorisation systems,
and contain a notion of innocent, concurrent strategy. The precise
links between the original notion of innocence, theirs, and ours
remain to be better investigated.

Melli\`es's work~\cite{DBLP:conf/lics/Mellies12}, although in a
deterministic and linear setting, incorporates some `concurrency' into
plays by presenting them as string diagrams. Our innocentisation
procedure further bears some similarity with Harmer et
al.'s~\cite{DBLP:conf/lics/HarmerHM07} presentation of innocence based
on a distributive law.  Hildebrandt's approach to fair testing
equivalence~\cite{DBLP:journals/tcs/Hildebrandt03} uses closely
related techniques, e.g., presheaves and sheaves --- indeed, our
innocence condition may be viewed as a sheaf condition.  However, (1)
his model falls in the aforementioned category of presheaf models for
which parallel composition is a complex operation; and (2) he uses
sheaves to correctly incorporate infinite behaviour in the model,
which is different from our notion of innocence.  Finally, direct
inspiration is drawn from Girard~\cite{DBLP:journals/mscs/Girard01},
one of whose aims is to bridge the gap between syntax and semantics.

\subsubsection*{Perspectives}
We plan to adapt our semantics to more complicated calculi like $\pi$,
the Join and Ambients calculi, functional calculi, possibly with extra
features (e.g., references, data abstraction, encryption), with a view
to eventually generalising it. Preliminary investigations already led
to a playground for $\pi$, whose adequacy remains to be
established. More speculative directions include
(1) defining a notion of morphisms for playgrounds, which should
  induce translations between strategies, and find sufficient
  conditions for such morphisms to preserve, resp.\ reflect testing
  equivalences;
(2) generalising playgrounds to apply them beyond programming
  language semantics; in particular, preliminary work shows that
  playgrounds easily account for cellular automata; this raises the
  question of how morphisms of playgrounds would compare with existing
  notions of simulations between cellular
  automata~\cite{DBLP:journals/tcs/DelormeMOT11};
(3) trying and recast the issue of deriving transition systems
  (\ltss{}) from
  reductions~\cite{DBLP:conf/concur/Sewell98}
  in terms of playgrounds.

\subsubsection*{Notation}
$\Set$ is the category of sets; $\set$ is a skeleton of the category
of finite sets, namely the category of finite ordinals and arbitrary
maps between them; $\ford$ is the category of finite ordinals and
monotone maps between them.  For any category $\C$, $\Psh{\C} =
[\op\C,\Set]$ denotes the category of presheaves on $\C$, while
$\Pshf{\C} = [\op\C,\set]$ and $\OPsh{\C} = [\op\C,\ford]$
respectively denote the categories of presheaves of finite sets and of
finite ordinals.  One should distinguish, e.g., `presheaf of finite
sets' $\op\C \to \set$ from `finite presheaf of sets' $F \colon \op\C
\to \Set$. The latter means that the disjoint union $\sum_{c \in \ob
  (\C)} F (c)$ is finite.  Throughout the paper, any finite ordinal
$n$ is seen as $\ens{1, \ldots, n}$ (rather than $\ens{0, \ldots,
  n-1}$).

The notion of \lts{} that we'll use here is a little more general than
the usual one, but this does not change much. We thus refer to the
long version for details. Let us just mention that we work in the
category $\Gph$ of reflexive graphs, and that the category of \ltss{}
over $A$ is for us the slice category $\Gph / A$. \Ltss{} admit a
standard change of base functor given by pullback, and its left
adjoint given by postcomposition. Given any \lts{} $p \colon G \to A$,
an edge in $G$ is \emph{silent} when it is mapped by $p$ to an
identity edge.  This straightforwardly yields a notion of weak
bisimilarity over $A$, which is denoted by $\wbisim_A$.

Our (infinite) CCS terms are coinductively generated by the typed grammar
\begin{mathpar}
  \inferrule{\Gam \vdash P \\ \Gam \vdash Q}{\Gam \vdash P|Q} \and %
  \inferrule{\Gam,a \vdash P}{\Gam \vdash \nu a. P} \and %
  \inferrule{\ldots \\ \Gam \vdash P_i \\ \ldots}{\Gam \vdash {\textstyle \sum}_{i \in n} \alpha_i.P_i}~(n \in \Nat) \,,%
\end{mathpar}%
where $\alpha_i$ is either $a$, $\abar$, for $a \in \Gam$, or $\tick$.
The latter is a `tick' move used in the definition of fair testing
equivalence.  As a syntactic facility, we here understand $\Gam$ as
ranging over $\Nat$, i.e., the free names of a process always are $1
\ldots n$ for some $n$. E.g., $\Gam,a$ denotes just $n+1$, and
$a \in \Gam$ means $a \in \ens{1, \ldots, \Gam}$.

\begin{definition}\label{defn:A}
  Let $\A$ be the reflexive graph with vertices given by finite
  ordinals, edges $\Gam \to \Gam'$ given by $\emptyset$ if $\Gam \neq
  \Gam'$, and by $\Gam + \Gam + \ens{\id,\tick}$ otherwise,  $\id
  \colon \Gam \to \Gam$ being the identity edge on $\Gam$.
  Elements of the first summand are denoted by $a
  \in \Gam$, while elements of the second summand are denoted by
  $\abar$.
\end{definition}
We view terms as a graph $\ccs{}$ over $\A$ with the usual transition
rules.  The graph $\A$ only has `endo'-edges; some \ltss{} below do
use more general graphs.

\section{Recalling the game}\label{sec:game}

\subsection{Positions, Moves, and Plays}\label{subsec:posmovesplays}
In this section, we define plays in our game.  For lack of space, we
cannot be completely formal. A formal definition, with a gentle
introduction to the required techniques, may be found in HP (Section
3).  Here is a condensed account. We start by defining a category
$\C$.  Then, \emph{positions} in our game are defined to be particular
finite presheaves in $\Chatf$. \emph{Moves} in our game are defined as
certain \emph{cospans} $X \xto{s} M \xot{t} Y$ in $\Chatf$, where $t$
indicates that $Y$ is the \emph{initial} position of the move, while
$s$ indicates that $X$ is the \emph{final} position. \emph{Plays} are
then defined as finite composites of moves in the bicategory
$\Cospan{\Chatf}$ of cospans in $\Chatf$. By construction, positions
and plays form a subbicategory, called $\Dccsv$.

In order to motivate the definition of our base category $\C$, recall
that (directed, multi) graphs may be seen as presheaves over the
category freely generated by the graph with two objects $\star$ and
$[1]$, and two edges $s,t \colon \star \to [1]$. Any presheaf $G$
represents the graph with vertices in $G(\star)$ and edges in $G[1]$,
the source and target of any $e \in G[1]$ being respectively $G(s)(e)$
and $G(t)(e)$. A way to visualise how such presheaves represent graphs
is to compute their \emph{categories of elements}~\cite{MM}. Recall
that the category of elements $\int G$ for a presheaf $G$ over $\C$
has as objects pairs $(c,x)$ with $c \in \C$ and $x \in F(c)$, and as
morphisms $(c,x) \to (d,y)$ all morphisms $f \colon c \to d$ in $\C$
such that $F(f)(y) = x$. This category admits a canonical functor
$\pi_F$ to $\C$, and $F$ is the colimit of the composite $\int F
\xto{\pi_F} \C \xto{\yoneda} \Chat$ with the Yoneda embedding. Hence,
e.g., the category of elements for the representable presheaf over
$[1]$ is the poset $(\star, s) \to ([1],\id_{[1]}) \ot (\star, t)$,
which could be pictured as
\diagramme[stringdiag={0.1}{0.6}]{baseline=(A.south)}{%
  \path[-,draw] %
  (A) edge (E) %
  (B) edge (E) %
  ; %
  \node at ($(B.south east) + (.1,0)$) {,} ;%
}{%
  \joueur{A} \& \node[regular polygon,anchor=center,regular polygon
  sides=3,fill,minimum size=3pt,draw,rotate=-90] (E) {}; \&
  \joueur{B} %
}{%
} \hspace*{-.7em}
thus recovering some graphical intuition.

We now define our base category $\C$. Let us first give the raw
definition, and then explain. $\C$ is freely generated from the graph
$\G$, defined as follows, plus some equations. As objects, $\G$ has %
(1) an object $\star$, %
(2) an object $[n]$ for all $n \in \Nat$, %
(3) objects $\iotaposni$ (output), $\iotanegni$ (input), $\nun$
(channel creation), $\paraln$ (left fork), $\pararn$ (right fork), $\paran$ (fork), $\tickn$ (tick),
$\taunimj$ (synchronisation), %
for all $i \in n, j \in m, n,m \in \Nat$. %
$\G$ has edges, for all $n$, %
(1) $s^{n}_1, \ldots, s^{n}_n \colon \star \to [n]$, %
(2) $s^{c},t^{c} \colon [n] \to c$, for all $c \in
\ens{\paraln,\pararn,\tickn} \cup (\cup_{i \in n} \ens{\iotaposni,
  \iotanegni})$, %
(3) $[n+1] \xto{s^{\nun}} \nun \xot{t^{\nun}} [n]$, %
(4) $\paraln \xto{l^n} \paran \xot{r^n} \pararn$, %
$\iotaposni \xto{\sender^{n,i,m,j}} \taunimj \xot{\receiver^{n,i,m,j}}
\iotanegmj$, for all $i \in n, j \in m$. In the following, we omit
superscripts when clear from context.  As equations, we require, for
all $n$, $m$, $i \in n$, and $j \in m$, (1) $s^c \rond s^n_i = t^c
\rond s^n_i$, %
(2) $s^{\nun} \rond s^{n+1}_i = t^{\nun} \rond s^n_i$, %
(3) $l \rond t = r \rond t$, %
(4) $\sender \rond t \rond s_i = \receiver \rond t \rond s_j$. %

\begin{wrapfigure}{r}{0pt}
      \diagramme[stringdiag={.8}{1.3}]{}{%
    \path[-,draw] %
    (a) edge (j1) %
    (c) edge (j1) %
    (b) edge (j1) %
    ; %
}{%
  \node (s_1) {$(\star, s_1)$}; \& \node (s_2) {$(\star, s_2)$}; \& \node (s_3) {$(\star, s_3)$}; \\
    \& \node (id) {$([3], \id_{[3]})$}; \\
    \canal{a}     \& \canal{b} \&  \canal{c} \\
    \& \joueur{j1}
    }{%
      (s_1) edge (id) %
      (s_2) edge (id) %
      (s_3) edge (id) %
    }
\end{wrapfigure}
In order to explain this seemingly arbitrary definition, let us
compute a few categories of elements for representable presheaves. Let
us start with an easy one, that of $[3]$ (we implicitly identify any
$c \in \C$ with $\yoneda c$). An easy computation shows that it is the
poset pictured above. We will think of it as a position with one
player $([3],\id_{[3]})$ connected to three channels, and draw it as
above, where the bullet represents the player, and circles represent
channels. (The graphical representation is slightly ambiguous, but
nevermind.) In particular, elements over $[3]$ represent ternary
players, while elements over $\star$ represent channels.
\emph{Positions} are finite presheaves empty except perhaps on $\star$
and $[n]$'s. Let $\Dccsh$ be the subcategory of $\Chatf$ consisting of
positions and monic arrows between them.

A more difficult category of elements is that of $\paraof{2}$. It is the poset generated by
the graph on the left:
  \begin{center}
    \diag (.4,.3) {%
      |(lt1)| l s s_1 = r s s_1 \& \& |(lt)| l s \& \& |(rt)| r s \& \& |(lt2)| l s s_2 = r s s_2 \\
      \& \& |(l)| l \& |(para)| \id_{\paraof{2}} \& |(r)| r \\ 
      |(ls1)| l t s_1 = r t s_1 \& \& \& |(ls)| l t = r t \&  \& \& |(ls2)| l t s_2 = r t s_2 %
    }{%
      (lt1) edge[identity] (ls1) %
      edge (lt) %
      edge[bend left=15] (rt) %
      (lt2) edge[identity] (ls2) %
      edge (rt) %
      edge[bend right=15,fore] (lt) %
      (ls1) edge (ls) %
      (ls2) edge (ls) %
      (ls) edge (l) edge (r) %
      (lt) edge (l) %
      (rt) edge (r) %
      (l) edge (para) %
      (r) edge (para) %
    }
    \hfil
          \diagramme[stringdiag={.18}{.33}]{}{
    \node[diagnode,at= (s1.south east)] {\ \ \ .} ; %
  }{%
    \& \& \&  \& \joueur{t_2} \\
    \canal{t0} \& \& \& \& \& \& \& \canal{t1} \& \& \\ 
    \& \& \joueur{t_1}  \&  \\
    \& \ \& \\
    \coord{i0} \& \& \& \couppara{para} \& \& \& \& \coord{i1} \& \& \\ 
    \& \ \& \\
    \& \&  \\
    \canal{s0} \& \& \& \joueur{s} \& \& \& \& \canal{s1} \& \& 
  }{%
    (para) edge[-] (t_1)
    (para) edge[-] (t_2) %
    (t0) edge[-] (t_1) %
    (t0) edge[-] (t_2) %
    (t1) edge[-,fore] (t_1) %
    (t1) edge[-] (t_2) %
    (s0) edge[-] (s) %
    (s1) edge[-] (s) %
    (s0) edge[-] (t0) %
    (s1) edge[-] (t1) %
    (s) edge[-] (para) %
  }  
  \end{center}
  We think of it as a binary player ($l t$) forking into two players
  ($l s$ and $r s$), and draw it as on the right. The vertical edges
  on the outside are actually identities: the reason we draw separate
  vertices is to identify the top and bottom parts of the picture as
  the respective images of both legs of the following cospan.  First,
  consider the inclusion $[2] \para [2] \into \paraof{2}$: its domain
  is any pushout of $[s_1,s_2] \colon (\star + \star) \to [2]$ with
  itself, i.e., the position consisting of two binary players sharing
  their channels; and the inclusion maps it to the top part of the
  picture.  Similarly, we have a map $[2] \into \paraof{2}$ given by
  the player $l t$ and its channels (the bottom part). The cospan
  $[2]\para [2] \to \paraof{2} \ot [2]$ is called the \emph{local fork
    move} of arity 2.

\newcommand{\longueurfigun}{.6}
\newcommand{\separation}{}

  \begin{wrapfigure}[3]{r}{0pt}
      \diagramme[stringdiag={.2}{.35}]{}{%
    \node[coordinate] (inter) at (intersection cs: %
    first line={(s)-- (s0)}, %
    second line={(s1)-- (t1)}) {} ; %
    \path[draw] (s) edge (inter) ; %
      \path[-] %
      (s1) edge (s) %
      (s0) edge (inter) %
      (s2) edge (s) %
      (s2) edge (s') %
      (t1) edge (t) %
      (t0) edge (t) %
      (t2) edge (t) %
      (t2) edge (t') %
      (t) edge (iota.west) %
      (s) edge (iota.west) %
      (s') edge (iota'.east) %
      (t') edge (iota'.east) %
      (t'0) edge (t') %
      (s'0) edge (s') %
      (s'0) edge[fore] (t'0) %
      (s0) edge[fore] (t0) %
      (s1) edge[fore] (t1) %
      (s2) edge (t2) %
      ; %
    \path[-] %
    (iota) edge[fore,densely dotted] (iota') %
    ; %
    \path (iota) --  (iota') node[coordinate,pos=.2] (iotatip) {} node[coordinate,pos=.8] (iotatip') {} ; %
      \path[-] (iota) edge[->,>=stealth,very thick] (iotatip) ; %
      \path[-] (iotatip') edge[->,>=stealth,very thick] (iota') ; %
      \foreach \x/\y in {s/t,s'/t'} \path[-] (\x) edge (\y) ; %
  }{%
    \canal{t0} \&  \& \&  \joueur{t} \& \&  \canal{t2} \& \&  %
    \joueur{t'} \& \&  \canal{t'0} \\
    \&   \canal{t1} \& \&   \& \&  \\
    \&   \& \&  \coupout{iota}{0} \& \&  \& \&  \coupin{iota'}{0} \\
    \& \& \&   \\
    \canal{s0} \&  \& \&  \joueur{s} \& \&  \canal{s2} \& \&  \joueur{s'} \& \&  \canal{s'0} \\ 
    \&  \canal{s1} %
  }{%
  }%
  \end{wrapfigure}
  For lack of space, we cannot spell out all such categories of
  elements and cospans. We give pictorial descriptions for $(m,j,n,i) = (3,3,2,1)$ of
 $\taumjni$ on the right and of $\paraln$, $\pararn$, $\iotaposmj$, $\iotanegni$, $\tickn$, 
 and $\nun$ below: \\ \noindent
\begin{minipage}[t][5em]{\textwidth}
    \diagramme[stringdiag={.2}{.33}]{}{ }{%
      \canal{t0} \& \joueur{t_1} \& \& \& \canal{t1}
      \\ 
      \& \&   \&  \\
      \& \ \& \\
      \coord{i0} \& \& \coupparacreux{para} \& \& \coord{i1}
      \\ 
      \& \ \& \\
      \& \&  \\
      \canal{s0} \& \& \joueur{s} \& \&
      \canal{s1} 
    }{%
      (t0) edge[-] (t_1) %
      (t1) edge[-] (t_1) %
      (s0) edge[-] (s) %
      (s1) edge[-] (s) %
      (s0) edge[-] (t0) %
      (s1) edge[-] (t1) %
      (s) edge[-] (para) %
      (para) edge[-] (t_1) %
    }
    \separation
%
    \diagramme[stringdiag={.2}{.33}]{}{ }{%
      \canal{t0} \& \& \& \joueur{t_2} \& \canal{t1}
      \\ 
      \& \&   \\
      \& \ \& \\
      \coord{i0} \& \& \coupparacreux{para}
      \\ 
      \& \ \& \\
      \& \&  \\
      \canal{s0} \& \& \joueur{s} \& \&
      \canal{s1} 
    }{%
      (t0) edge[-] (t_2) %
      (t1) edge[-] (t_2) %
      (s0) edge[-] (s) %
      (s1) edge[-] (s) %
      (s0) edge[-] (t0) %
      (s1) edge[-] (t1) %
      (s) edge[-] (para) %
      (para) edge[-] (t_2) %
    }
    \separation
%
    \diagramme[stringdiag={.3}{.5}]{baseline=($(iota.south)$)}{%
      \node[coordinate] (inter) at (intersection cs: %
      first line={(s)-- (s0)}, %
      second line={(s1)-- (t1)}) {} ; %
      \path[draw] (s) edge (inter) ; %
      \path[-] %
      (s0) edge (inter) %
      (s2) edge (s) %
      (t1) edge (t) %
      (t0) edge (t) %
      (t2) edge (t) %
      (t) edge (iota.west) %
      (s) edge (iota.west) %
      (s0) edge[fore] (t0) %
      (s1) edge[fore] (t1) %
      (s2) edge (t2) %
      (s1) edge (s) %
      ; %
      \moveout[.3]{iota}{0} %
      \foreach \x/\y in {s/t} \path[-] (\x) edge (\y) ; %
    }{%
      \canal{t0}\& \& \joueur{t} \& \canal{t2} \\
      \& \canal{t1} \\
      \& \& \coupout{iota}{0}  \\ \\
      \canal{s0}
      \& \& \joueur{s} \& \canal{s2}  \\
      \& \canal{s1} \& %
    }{%
    }%
    \separation
    \diagramme[stringdiag={.6}{\longueurfigun}]{baseline=($(in.south)$)}{
      \path[-] (a) edge (a') %
      edge (p) %
      (in.west) edge (p) edge (p') %
      (p') edge (a') edge (b') %
      (b) edge (p) edge (b') %
      ; %
      \movein[.3]{in}{180} %
      \foreach \x/\y in {p/p',a/a'} \path[-] (\x) edge (\y) ; %
    }{ %
      \canal{a'} \& \joueur{p'} \& \canal{b'} \\ %
      \coord{bin} \& \coupout{in}{0} \& \\ %
      \canal{a} \& \joueur{p} \& \canal{b} %
    }{%
    } %
    \separation
%
    \diagramme[stringdiag={.6}{\longueurfigun}]{}{ \path[-] (a) edge
      (a') %
      edge (p) %
      (tick) edge[shorten <=-1pt] (p) edge[shorten <=-1pt] (p') %
      (p') edge (a') edge (b') %
      (b) edge (p) edge (b') %
      ; %
    }{ %
      \canal{a'} \& \joueur{p'} \& \canal{b'} \\ %
      \& \couptick{tick} \& \\ %
      \canal{a} \& \joueur{p} \& \canal{b} %
    }{%
    } \separation
    \diagramme[stringdiag={.3}{.5}]{baseline=($(nu.center)$)}{%
      \node[coordinate] (inter) at (intersection cs: %
      first line={(s)-- (s0)}, %
      second line={(s1)-- (t1)}) {} ; %
      \path[draw] (s) edge (inter) ; %
      \path[-,draw] %
      (s1) edge (s) %
      (t1) edge (t) %
      (t0) edge (t) %
      (t2) edge (t) %
      (t) edge (nu) %
      (s) edge (nu) %
      (s0) edge[fore] (t0) %
      (s1) edge[fore] (t1) %
      (nu) edge[gray,very thin] (t2) %
      (inter) edge (s0) %
      ; %
      \node[diagnode,at= (s.base east)] {\ \ \ .} ; %
    }{%
      \canal{t0}
      \& \& \joueur{t} \& \& \canal{t2} \& \\
      \& \canal{t1} \&    \\
      \& \& \coupnu{nu} \& \&  \& \\
      \\
      \canal{s0}
      \& \& \joueur{s} \& \& \& \\
      \& \canal{s1} \& \& \& %
    }{%
    }%
  \end{minipage}
  In each case, the representable is the
  middle object of a cospan determined by the top and bottom parts of the
  picture. E.g., for synchronisation we have $[m]
  \mathbin{{}_{j}{\para}{}_i} [n] \xto{s} \taumjni \xot{t}[m]
  \mathbin{{}_{j}{\para}{}_i} [n]$, where $[m]
  \mathbin{{}_{j}{\para}{}_i} [n]$ denotes the position $X$ with one
  $m$-ary player $x$, one $n$-ary player $y$, such that $X(s_j)(x) =
  X(s_i)(y)$. Note that there is a crucial design choice in defining
  the legs of these cospans, which amounts to choosing initial and
  final positions for our moves. 

\begin{wrapfigure}{r}{0pt}
  \begin{minipage}[c][3em]{0.33\linewidth}
    \vspace*{-1em}
    \begin{equation}
      \label{eq:imove}
        \diag(.3,.6){%
    \&|(I)| I \\
   |(X)| X \&|(M)| M \&|(Y)| Y %
  }{%
    (I) edge (X) edge (M) edge (Y) %
    (X) edge (M) %
    (Y) edge (M) %
  }
    \end{equation}
  \end{minipage}
\end{wrapfigure}
These cospans altogether form the set of \emph{local moves}, and are
the `seeds' for (global) moves, in the following sense.  Calling an
\emph{interface} any presheaf consisting only of channels, local moves
may be equipped with a canonical interface, consisting of the channels
of their initial position. If $X \xto{s} M \xot{t} Y$ is a local move
(with final position $X$), and $I$ is its canonical interface, we
obtain a commuting diagram~\eqref{eq:imove} in $\Chatf$ (with all
arrows monic). For any morphism $I \to Z$ to some position $Z$,
pushing $I \to X$, $I \to M$, and $I \to Y$ along $I \to Z$ yields, by
universal property of pushout, a new cospan, say $X' \to M' \ot Y'$.
Letting \emph{(global) moves} be all cospans obtained in this way, and 
plays be all composites of moves in $\Cospan{\Chatf}$, we obtain, 
as promised a subbicategory $\Dccsv$.

\begin{wrapfigure}{r}{0pt}
\begin{minipage}{0.48\linewidth}
\vspace*{-1em}
    \begin{equation}
  \diagramme[stringdiag={.4}{.5}]{}{%
    \path[-,draw]
    (in) edge (s) edge (t) %
    (t0) edge (t) edge (s0) %
    (s0) edge (s) %
    (t') edge (s') edge (t0) %
    (s') edge (s0) %
    (in') edge (t') edge (u') %
    (u0) edge (u') edge (t0) %
    (u) edge (t) edge (u0) %
    ; %
    \path (in) --  (inend) node[coordinate,pos=.5] (intip) {} ; %
    \path[-] (in) edge[->,>=stealth,very thick] (intip) ; %
    \path (inend') --  (in') node[coordinate,pos=.5] (intip') {} ; %
    \path[-] (intip') edge[->,>=stealth,very thick] (in') ; %
    }{%
      \joueur{u} \& \canal{u0} \& \joueur{u'} \\
      \& \coupout{inend'}{0} \& \coupout{in'}{0} \\
      \joueur{t} \& \canal{t0} \& \joueur{t'} \\
      \coupout{in}{0} \& \coupout{inend}{0} \\
      \joueur{s} \& \canal{s0} \& \joueur{s'} %
    }{%
    }
    =
    \diagramme[stringdiag={.4}{.5}]{}{%
    \path[-,draw]
    (in) edge (s) edge (t) %
    (t0)
    edge (s0) %
    (s0) edge (s) %
    (t') edge (s') 
    (s') edge (s0) %
    (in') edge (t') edge (u') %
    (u0) edge (u') edge (t0) %
    (u) edge (t) edge (u0) %
    ; %
    \path (in) --  (inend) node[coordinate,pos=.5] (intip) {} ; %
    \path[-] (in) edge[->,>=stealth,very thick] (intip) ; %
    \path (inend') --  (in') node[coordinate,pos=.5] (intip') {} ; %
    \path[-] (intip') edge[->,>=stealth,very thick] (in') ; %
    }{%
      \joueur{u} \& \canal{u0} \& \joueur{u'} \\
      \& \coupout{inend'}{0} \& \coupout{in'}{0} \\
      \coord{t} \& \coord{t0} \& \coord{t'} \\
      \coupout{in}{0} \& \coupout{inend}{0} \\
      \joueur{s} \& \canal{s0} \& \joueur{s'} %
    }{%
    }    
    =
    \diagramme[stringdiag={.4}{.5}]{}{%
    \path[-,draw]
    (in) edge (s) edge (u) %
    (in') edge (s') edge (u') %
    (u0) edge (u) %
    edge (u') %
    edge (s0) %
    (s0) edge (s) %
    edge (s') %
    ; %
    \path (in) --  (end) node[coordinate,pos=.5] (intip) {} ; %
    \path[-] (in) edge[->,>=stealth,very thick] (intip) ; %
    \path (end) --  (in') node[coordinate,pos=.5] (intip') {} ; %
    \path[-] (intip') edge[->,>=stealth,very thick] (in') ; %
    }{%
      \joueur{u} \& \canal{u0} \& \joueur{u'} \\
     \coupout{in}{0}  \& \coupout{end}{0} \& \coupout{in'}{0} \\
      \joueur{s} \& \canal{s0} \& \joueur{s'} %
    }{%
    }    \label{eq:play}
\end{equation}
  \end{minipage}
\end{wrapfigure}
Passing from local to global moves allows moves to occur in larger
positions. Furthermore, we observe that plays feature some
concurrency. For instance, composing two global moves as on the right,
we obtain a play in which the order of appearance of moves is no
longer visible.  In passing, this play embeds into a synchronisation,
but is not one, since the input and output moves are not related. This
play may be understood as each player communicating with the outside
world.
We conclude with a useful classification of moves.
\begin{definition}
  A move is \emph{full} iff it is neither a left nor a right fork. We
  call $\F$ the graph of global, full moves.
\end{definition}
Intuitively, a move is full when its final position contains all
possible avatars of involved players.

\section{Behaviours, strategies, and fair testing}\label{sec:strats}
\subsection{Behaviours}

\begin{wrapfigure}{r}{0pt}
      \diag|baseline= (m-1-1.base) |{%
      U \& U' \\
      X \& X' %
    }{%
      (m-1-1) edge[labelu={}] (m-1-2) %
      (m-2-1) edge[labell={}] (m-1-1) %
      (m-2-1) edge[labeld={}] (m-2-2) %
      (m-2-2) edge[labelr={}] (m-1-2) %
    }
\end{wrapfigure}
Recall from HP the category $\E$ whose objects are maps $U \ot X$ in
$\Chat$, such that there exists a play $Y \to U \ot X$, i.e., objects
are plays, where we forget the final position. Its morphisms $(U
\ot X) \to (U' \ot X')$ are commuting diagrams as on the right with
all arrows monic.  Morphisms $U \to U'$ in $\E$ represent extensions
of $U$, both spatially (i.e., embedding into a larger position) and
dynamically (i.e., adding more moves).

We may relativise this category $\E$ to a particular position $X$,
yielding a category $\E(X)$ of plays on $X$: take the fibre over $X$
of the functor $\cod \colon \E \to \Dccsh$ mapping any play $U \ot X$
to its initial position $X$. The objects of $\E(X)$ are just plays $(U
\ot X)$ on $X$, and morphisms are morphisms of plays whose lower
border is $\id_X$.  This leads to a category of `naive' strategies,
called behaviours.
\begin{definition}
  The category $\Beh{X}$ of \emph{behaviours} on $X$ is the category
  $\Pshf{\E(X)}$ of presheaves of finite sets on $\E (X)$.
\end{definition}
Behaviours suffer from the deficiency of allowing unwanted cooperation
between players. HP (Example 12) exhibits a behaviour where players
choose with whom they synchronise, which clearly is not allowed in CCS.

\subsection{Strategies}
To rectify this, we consider the full subcategory $\EVi$ of $\E$
consisting of \emph{views}, i.e., compositions of basic local
moves. We relativise views to a position $X$, as follows.  Let, for
any $n \in \Nat$, $[n]$ denote the single $n$-ary player, i.e., a
single player connected to $n$ distinct channels.  Players of $X$ are
in 1-1 correspondence with pairs $(n,x)$, with $x \colon [n] \to X$ in
$\Dccsh$.  Relativisation of $\EVi$ to $X$ is given by the category
$\EVi_X$ with as objects all pairs $(V,x)$, where $x \colon [n] \to
X$, and $V$ is a view with initial position $[n]$.  Morphisms are
induced by those of $\E$.

    \begin{definition}
     The category $\SS_X$ of \emph{strategies} on $X$ is the category
     $\OPsh{\EVi_X}$ of presheaves of finite ordinals on $\EVi_X$.
    \end{definition}
    
    \begin{wrapfigure}[4]{r}{0pt}
      \diagramme{baseline=($(m-1-1.base)$)}{}{%
        \op{\EVi_X} \& \op{\E_X} \& \op{\E(X)} \\
        \ford \& \set, %
      }{%
        (m-1-1) edge[labell={S}] (m-2-1) %
        edge[into] (m-1-2) %
        (m-2-1) edge[into,labelu={i}] (m-2-2) %
        (m-1-2) edge[labell={S'}] (m-2-2) %
        (m-1-3) edge[linto,labeld={j}] (m-1-2) %
        edge[labelbr={\exta{X}{S}}] (m-2-2)
      }
    \end{wrapfigure}
    This rules out undesired behaviours. Recall from HP how to map
    strategies to behaviours: let first $\E_X$ be the category
    obtained as $\EVi_X$ from all plays instead of just views.  Then,
    starting from a strategy $S$, let $S'$ be obtained by right Kan
    extension of $i \rond S$ (by $\op{\EVi_X} \into \op{\E_X}$ being
    full and faithful), and let $\exta{X}{S} = S' \rond j$.  The assignment $S
    \mapsto \exta{X}{S}$ extends to a full and faithful functor $\extafun{X}
    \colon \SS_X \to \Beh{X}$. Furthermore, $\extafun{X}$ admits a
    left adjoint, which we call \emph{innocentisation}, maping naive
    strategies (behaviours) to innocent ones. By standard
    results~\cite{MacLane:cwm}, we have for any $S$: $\exta{X}{S} (U)
    = \int_{v \in \EVi_X} S (v)^{\E_X (v,U)}.$ Equivalently,
    $\exta{X}{S} (U)$ is a limit of $\op{(\EVi_X / U)} \xto{\dom}
    \op{\EVi_X} \xto{S} \ford \into \set.$

\subsection{Decomposition: a syntax for strategies}\label{subsec:syntax}
Our definition of strategies is rather semantic in flavour. Indeed,
presheaves are akin to domain theory. However, they also lend
themselves well to a syntactic description.  First, it is shown in HP
that strategies on an arbitrary position $X$ are in 1-1 correspondence
with families of strategies indexed by the players of $X$. Recall that
$[n]$ is the position consisting of one $n$-ary player, and that
players of $X$ may be defined as elements of $\Pl (X) = \sum_{n \in
  \Nat} \Dccsh ([n],X)$.
\begin{proposition}
We have  $\SSX \iso \prod_{(n,x) \in \Pl (X)} \SSn$. 
For any $S \in \SSX$, we denote by $S_{(n,x)}$ the component corresponding
to $(n,x) \in \Pl (X)$ under this isomorphism.
\end{proposition}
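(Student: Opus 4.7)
\medskip

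\noindent\textbf{Proof proposal.} The plan is to reduce the statement to a categorical decomposition of the indexing category $\EVi_X$, and then invoke that presheaf categories send coproducts of small categories to products of presheaf categories. Concretely, I would prove the equivalence
\[
  \EVi_X \;\iso\; \coprod_{(n,x) \in \Pl(X)} \EVi_{[n]},
\]
and then conclude $\SSX = \OPsh{\EVi_X} \iso \prod_{(n,x) \in \Pl(X)} \OPsh{\EVi_{[n]}} = \prod_{(n,x) \in \Pl(X)} \SSn$, with the isomorphism sending $S$ to the family $(S_{(n,x)})$ obtained by restricting $S$ along the $(n,x)$-th coproduct injection.

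For the equivalence on objects, I send $(V,x)$, where $x \colon [n] \to X$ and $V$ is a view with initial position $[n]$, to the pair $((n,x), V)$; the unique $n$ is determined by $x$, and $V$ lives in $\EVi_{[n]}$ as the object $(V,\id_{[n]})$. The inverse sends $((n,x),V)$ to $(V,x)$. The delicate half is the morphism part: I must show that any morphism $(V,x) \to (V',x')$ in $\EVi_X$ forces $(n,x) = (n',x')$ as elements of $\Pl(X)$, and that once this is the case the remaining data is exactly a morphism $V \to V'$ in $\EVi_{[n]}$. The crucial ingredient here is that each view, being a composite of basic local moves, is built `on top of' a single player and so has a canonical principal player which is preserved throughout. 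Thus a morphism of views, viewed as a commutative square of plays in $\E$, must identify the principal players of source and target; combined with compatibility with $x$ and $x'$ this forces the equality of $(n,x)$ and $(n',x')$ in $\Pl(X)$, and identifies the residual data with a morphism of $\EVi_{[n]}$.

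Having established the coproduct decomposition, the conclusion is standard: for any small discrete coproduct $\mathbf{C} = \coprod_i \mathbf{C}_i$, the functor $\OPsh{-}$ yields $\OPsh{\mathbf{C}} \iso \prod_i \OPsh{\mathbf{C}_i}$, since a presheaf on a coproduct is exactly a family of presheaves on the summands. Applying this to our decomposition of $\EVi_X$ gives the stated isomorphism, and unwinding the equivalence shows that the component $S_{(n,x)}$ is obtained by restricting $S$ to views `rooted at' the player $(n,x)$, matching the intended notation.

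The main obstacle I anticipate is the morphism-level verification, namely pinning down precisely what morphisms in $\EVi_X$ look like (since the paper says only that they are `induced by those of $\E$') and using the single-principal-player property of views to rule out morphisms that would mix distinct players of $X$. This requires a careful case analysis over the basic local moves generating views (fork, input/output, $\nu$, $\tick$, $\tau$) to confirm that none of them can relate two distinct players of a position, and that the only automorphisms of $[n]$ compatible with the view structure are identities.
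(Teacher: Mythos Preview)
Your approach is correct and matches the paper's. The paper does not spell out a proof of the proposition (it is attributed to HP), but the very next corollary is proved by exactly your method: ``We have $\EVi_Z \iso \EVi_X + \EVi_Y$, and conclude by universal property of coproduct.'' Your argument is the general form of this.

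One remark on your anticipated obstacle: the morphism-level verification is simpler than you fear and does not require a case analysis over basic moves. The initial position of a view based at $(n,x)$ is the representable $[n]$, and a morphism $(V,x) \to (V',x')$ in $\EVi_X$ must in particular induce a map of initial positions $[n] \to [n']$ in $\Dccsh$ compatible with $x$ and $x'$. But by Yoneda, $\Chat([n],[n']) \iso \C([n],[n'])$, and inspection of the generators of $\C$ shows there are no morphisms $[n] \to [n']$ for $n \neq n'$ and only the identity for $n = n'$. Hence $n = n'$ and $x = x'$ immediately, with no need to analyse the internal structure of views.
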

This result yields a construction letting two strategies interact
along an \emph{interface}, i.e., a position consisting only of
channels. This will be the basis of our semantic definition of fair
testing equivalence. Consider any pushout $Z$ of $X \ot I \to Y$ where
$I$ is an interface. We have
\begin{corollary}
  $\SS_Z \iso \SSX \times \SSY$.
\end{corollary}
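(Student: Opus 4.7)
The plan is to reduce the statement to a combinatorial decomposition of the players of $Z$ via the preceding proposition, which gives $\SSX \iso \prod_{(n,x) \in \Pl (X)} \SSn$ and analogously for $Y$ and $Z$. It therefore suffices to exhibit a bijection $\Pl (Z) \iso \Pl (X) + \Pl (Y)$ compatible with arities, and then rearrange the product.

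The key observation is that, since $I$ is an interface, $I ([n]) = \emptyset$ for all $n \in \Nat$, so $I$ contributes nothing to the pushout in player arities. Concretely, colimits in $\Chatf$ are computed pointwise, so the pushout $Z$ of $X \ot I \to Y$ satisfies $Z(\star) = X(\star) +_{I(\star)} Y(\star)$ and $Z([n]) = X([n]) + Y([n])$ for all $n$. Since $\Set$ is adhesive and the maps $I \into X$ and $I \into Y$ are monic, the resulting coprojections $X \into Z$ and $Y \into Z$ are still monic, and $Z$ remains a position; thus the same square is a pushout in $\Dccsh$.

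Now, for any $(n,z) \in \Pl (Z)$, i.e., a monic $z \colon [n] \to Z$ in $\Dccsh$, the image of the unique player element of $[n]$ lies in $Z([n]) = X([n]) + Y([n])$, so $z$ factors uniquely through exactly one of $X \into Z$ or $Y \into Z$. This yields the desired arity-preserving bijection $\Pl (Z) \iso \Pl (X) + \Pl (Y)$. Combining with the proposition,
\[ \SSZ \;\iso\; \prod_{(n,z) \in \Pl(Z)} \SSn \;\iso\; \Big(\prod_{(n,x) \in \Pl(X)} \SSn\Big) \times \Big(\prod_{(n,y) \in \Pl(Y)} \SSn\Big) \;\iso\; \SSX \times \SSY. \]

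The only mildly delicate step is the verification that $Z$, computed pointwise, is a genuine position in $\Dccsh$ with monic legs — but this is routine given the adhesiveness of presheaf categories and the monicity of $I \into X, Y$. Everything else is formal: the pointwise nature of colimits in $\Chatf$, the emptiness of $I$ at each $[n]$, and reindexing of the product over the coproduct of players.
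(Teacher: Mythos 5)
Your proof is correct, but it takes a different route from the paper's. The paper's own argument is a one-liner at the level of view categories: it observes that $\EVi_Z \iso \EVi_X + \EVi_Y$ and then concludes because the presheaf construction $\OPsh{(-)}$ turns a coproduct of categories into a product of presheaf categories (the ``universal property of coproduct''). You instead route through the preceding Proposition ($\SSX \iso \prod_{(n,x) \in \Pl(X)} \SSn$, and likewise for $Y$ and $Z$) and prove the set-level splitting $\Pl(Z) \iso \Pl(X) + \Pl(Y)$ by computing the pushout pointwise in $\Chatf$ and using that an interface is empty at every $[n]$. The underlying combinatorial reason is the same in both cases --- the interface contributes only channels, never players, so every player (and hence every anchored view) of $Z$ lives wholly in $X$ or wholly in $Y$ --- but the paper packages it as a decomposition of $\EVi_Z$ itself, which is slightly stronger in practice: it gives the isomorphism directly as ``presheaves on a coproduct,'' the form in which it is used later to define the copairing $[S,T]$ and to reason about its extension to behaviours. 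Your version is more elementary and self-contained modulo the Proposition (whose proof in HP rests on essentially the same player analysis), at the cost of an extra reindexing of products and of leaning on facts about $\Chatf$ (pointwise colimits, monicity of the pushout legs via adhesivity) that the paper's proof never needs to invoke. Two small remarks: your claim that the square is again a pushout in $\Dccsh$ is not actually needed for your argument (all you use is that $Z$ is a position and that $X \into Z$, $Y \into Z$ are monic), and your factorisation step is fine as stated --- by Yoneda, a player $z \colon [n] \to Z$ factors through the subobject $X \into Z$ exactly when its player element lies in the summand $X([n])$ of $Z([n])$, and monicity of $z$ forces monicity of the factoring map, so the bijection is indeed arity-preserving.
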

\begin{proof}
  We have $\EVi_Z \iso \EVi_X + \EVi_Y$, and conclude by universal
  property of coproduct.
\end{proof}
We denote by $[S,T]$ the image of $(S,T) \in \SSX \times \SSY$ under
this isomorphism.

So, strategies over arbitrary positions may be decomposed into
strategies over `typical' players $[n]$. Let us now explain that
strategies over such players may be further decomposed. For any
strategy $S$ on $[n]$ and basic move $b \colon [n'] \to [n]$, let the
\emph{residual} $S \cdot b$ of $S$ after $b$ be the strategy playing
like $S$ after $b$, i.e., for all $v \in \EVi_{[n']}$, $(S \cdot b)(v)
= S (b \vrond v)$, where $\vrond$ denotes composition in $\Dccsv$. $S$
is almost determined by its residuals. The only information missing
from the $S\cdot b$'s to reconstruct $S$ is the set of initial states
and how they relate to the initial states of each $(S \cdot b)$.
Thus, for any position $X$, let $\id^v_{X}$ denote the identity play
on $X$ (i.e., nothing happens).  For any initial state $\state \in
S(\id_{[n]})$, let $\restr{S}{\state}$ be the restriction of $S$ to
states derived from $\state$, i.e., for all $v$, those $\state' \in
S(v)$ which are mapped to $\state$ under the restriction $S(!) \colon
S(v) \to S(\id_{[n]})$.  $S$ is determined by its set $S(\id_{[n]})$
of initial states, plus the function $(\state, b) \mapsto
(\restr{S}{\state} \cdot b)$. Since $S (\id_{[n]})$ is a finite
ordinal $m$, we have for all $n$:
\begin{theorem}\label{thm:stratcoalg}
   $\SS_{[n]} \iso \sum_{m \in \Nat} (\prod_{b \colon [n'] \to [n]} \SS_{[n']})^m
   \iso (\prod_{b \colon [n'] \to [n]} \SS_{[n']})^\star$.
\end{theorem}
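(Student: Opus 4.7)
My plan is to use the recursive structure of views. Every view on $[n]$ is either the identity $\id^v_{[n]}$ or factors uniquely as $b \vrond w$ for some basic move $b \colon [n'] \to [n]$ and view $w$ on $[n']$; this follows by induction on the length of the sequence of basic moves composing the view. Morphisms in $\EVi_{[n]}$ decompose analogously: $\id^v_{[n]}$ is initial, while a morphism $b \vrond w \to b' \vrond w'$ exists iff $b = b'$ and there is a morphism $w \to w'$ in $\EVi_{[n']}$. One may thus identify $\EVi_{[n]}$ with the category obtained by adjoining a fresh initial object to $\coprod_{b} \EVi_{[n']}$.

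First I would handle the sub-case of strategies $S \in \SS_{[n]}$ with a single initial state, i.e., $S(\id^v_{[n]}) = 1$. Taking the residual $S \cdot b$ at each basic move $b \colon [n'] \to [n]$ gives a map $S \mapsto (S \cdot b)_b$ into $\prod_b \SS_{[n']}$. Its inverse sends $(T_b)_b$ to the strategy defined on objects by $S(\id^v_{[n]}) = 1$ and $S(b \vrond w) = T_b(w)$, acting on a morphism $b \vrond w \to b \vrond w'$ as $T_b$ does on the underlying $w \to w'$, and as the unique monotone map into $1$ on any morphism out of $\id^v_{[n]}$. Functoriality and monotonicity are then inherited from the $T_b$.

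For an arbitrary $S \in \SS_{[n]}$, set $m = S(\id^v_{[n]}) \in \ford$. For each $\state \in m$, let $\restr{S}{\state}(v)$ denote the fiber of the monotone restriction $S(!) \colon S(v) \to m$ over $\state$, endowed with its induced order. Because $S(!)$ is monotone, $S(v)$ decomposes lexicographically as $\sum_{\state \in m} \restr{S}{\state}(v)$ in $\ford$, and all restriction maps of $S$ respect this decomposition. Each $\restr{S}{\state}$ is then a strategy with a single initial state, hence by the previous paragraph corresponds to an element of $\prod_b \SS_{[n']}$, yielding
\[
  \SS_{[n]} \iso \sum_{m \in \Nat} \Bigl( \prod_{b \colon [n'] \to [n]} \SS_{[n']} \Bigr)^m.
\]
The second isomorphism in the statement is then immediate from the definition $X^\star = \sum_m X^m$.

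The main obstacle, as I see it, is bookkeeping in $\ford$ rather than in $\Set$: one must verify that the sum-of-fibers decomposition really is lexicographic, so that the ordering on $S(v)$ is faithfully reconstructed from $m$ and the $\restr{S}{\state}(v)$, and that this reconstruction is natural in $v$. Both facts, however, are essentially forced by the monotonicity of restriction maps in a presheaf valued in $\ford$; once they are dispatched, the remainder of the argument is pure presheaf combinatorics.
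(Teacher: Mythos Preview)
Your proposal is correct and follows essentially the same route as the paper: the paper's argument (given as the discussion immediately preceding the theorem rather than as a separate proof) is precisely that $S$ is determined by $S(\id_{[n]})$ together with the assignment $(\state,b) \mapsto \restr{S}{\state} \cdot b$, which is exactly your decomposition into initial-state fibres followed by residuals along basic moves. You supply more detail than the extended abstract does on the $\ford$-level bookkeeping (the lexicographic fibre decomposition and its naturality), but this is just filling in what the paper leaves implicit.
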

This result may be understood as saying that strategies form a
fixpoint of a certain (polynomial~\cite{Kock01012011}) endofunctor of
$\Set / \DI$, where $\DI$ is the set of `typical' players $[n]$.  This
may be strengthened to show that they form a terminal coalgebra, i.e,
that they are in bijection with infinite terms in the following typed
grammar, with judgements $n \vdashdefinite D$ and $n \vdash S$, where $D$ is
called a \emph{definite prestrategy} and $S$ is a \emph{strategy}:%
\begin{mathpar}
\inferrule{\ldots \ n_b \vdash S_b \ \ldots \ {(\forall b \colon [n_b] \to [n] \in \MMMB_n)}
}{
n \vdashdefinite \langle (S_b)_{b \in \MMMB_n} \rangle
}
\and
\inferrule{\ldots \  n \vdashdefinite D_i \  \ldots \ (\forall i \in m)}{
n \vdash \oplus_{i \in m} D_i}~(m \in \Nat),
\end{mathpar}%
where $\MMMB_n$ denotes the set of all isomorphism classes of basic
moves from $[n]$. We need to use isomorphism classes here, because
strategies may not distinguish between different, yet isomorphic basic
moves.  This achieves the promised syntactic description of
strategies. We may readily define the translation of CCS processes,
coinductively, as follows. For processes with channels in $\Gam$, we
define
\begin{mathpar}
\begin{array}[t]{rcl}
  \Transl{\sum_{i \in n} \alpha_i.P_i} & = & \langle b \mapsto
      \oplus_{\ens{i \in n \aalt b = \Transl{\alpha_i}}} \Transl{P_i}
      \rangle \\
      \Transl{\nu a.P} & = & 
      \langle
            \nugam  \mapsto  \Transl{P},
            \_  \mapsto  \emptyset 
            \rangle \\
      \Transl{P\para Q} & = & 
 \langle
            \paralgam  \mapsto  \Transl{P}  ,
            \parargam  \mapsto  \Transl{Q}  ,
            \_  \mapsto  \emptyset 

 \rangle 
\end{array}
\and \begin{array}[t]{rcl}
    \Transl{a} & = & \iotaneg{\Gam,a} \\
    \Transl{\abar} & = & \iotapos{\Gam,a} \\
    \Transl{\tick} & = & \tickgam.
\end{array}
\end{mathpar}
E.g., $a.P + a.Q + \bar{b}.R$ is mapped to
$\langle \iotaneg{\Gam,a} \mapsto (\Transl{P} \oplus \Transl{Q}),
\iotapos{\Gam,b} \mapsto \Transl{R}, \_ \mapsto \emptyset \rangle.$

\subsection{Semantic fair testing}
We may now recall our semantic analogue of fair testing equivalence.
\begin{definition}
 \emph{Closed-world} moves are (the global variants of)
  $\new$,$\tick$,$\paran$, and $\taunimj$. A play is \emph{closed-world}
  when it is a composite of closed-world moves.
\end{definition}

Let a closed-world play be \emph{successful} when it contains a
$\tick$ move.  Let then $\bbot_Z$ denote the set of behaviours $B$
such that for any unsuccessful, closed-world play $U \ot Z$ and
$\state \in B (U)$, there exists $f \colon U \to U'$, with $U'$
closed-world and successful, and $\state' \in B (U')$ such that
$B(f)(\state') = \state$.  Finally, let us say that a triple
$(I,h,S)$, for any $h \colon I \to X$ and strategy $S \in \SSX$,
\emph{passes} the test consisting of a morphism $k \colon I \to Y$ of
positions and a strategy $T \in \SSY$ iff $\exta{Z}{[S,T]} \in
\bbot{}_Z$, where $Z$ is the pushout of $h$ and $k$.  Let
$S^{\bbot{}}$ denote the set of all such $(k,T)$.
\begin{definition}
  For any $h \colon I \to X$, $h' \colon I \to X'$, $S \in \SSX$, and
  $S' \in \SS_{X'}$, $(I,h,S) \faireq (I,h',S')$ iff
  $(I,h,S)^{\bbot{}} = (I,h',S')^{\bbot{}}$.
\end{definition}
This yields an equivalence relation, analogous to standard fair
testing equivalence, which we hence also call fair testing
equivalence.

We have defined a translation $\Translfun$ of CCS processes to
strategies, which raises the question of whether it preserves or
reflects fair testing equivalence. The rest of the paper is devoted to
proving that it does both.


\section{Playgrounds and main result}\label{sec:main}

\subsection{Playgrounds: a theory of individuality and atomicity}\label{subsec:playgrounds}
\begin{wrapfigure}{r}{0pt}
  \Diag{%
    \twocellbr{m-2-1}{m-1-1}{m-1-2}{\alpha} %
    \twocellbr{m-2-2}{m-1-2}{m-1-3}{\alpha'} %
    \twocellbr{m-3-1}{m-2-1}{m-2-2}{\beta} %
    \twocellbr{m-3-2}{m-2-2}{m-2-3}{\beta'} %
  }{%
    X \& X' \& X'' \\
    Y \& Y' \& Y'' \\
    Z \& Z' \& Z'',
  }{%
    (m-1-1) edge[labelu={h}] (m-1-2) %
    edge[labell={u}] (m-2-1) %
    (m-2-1) edge[labelu={h'}] (m-2-2) %
    (m-1-2) edge[labell={u'}] (m-2-2) %
    (m-1-2) edge[labelu={k}] (m-1-3) %
    (m-2-2) edge[labelu={k'}] (m-2-3) %
    (m-1-3) edge[labelr={u''}] (m-2-3) %
    (m-2-1) 
    edge[labell={v}] (m-3-1) %
    (m-3-1) edge[labeld={h''}] (m-3-2) %
    (m-2-2) edge[labell={v'}] (m-3-2) %
    (m-3-2) edge[labeld={k''}] (m-3-3) %
    (m-2-3) edge[labelr={v''}] (m-3-3) %
  }
\end{wrapfigure}
We start by trying to give an idea of the notion of playground.
To start with, we organise the game into a \emph{(pseudo) double
  category}~\cite{GrandisPare,GarnerPhD}.  This is a weakening of
Ehresmann's double categories~\cite{Ehresmann:double2}, where one
direction has non strictly associative composition. Although we
consider proper pseudo double categories, we often may treat them
safely as double categories.  A pseudo double category $\D$ consists
of a set $\ob (\D)$ of \emph{objects}, shared by two categories $\Dh$
and $\Dv$. $\Dh$ is called the \emph{horizontal} category of $\D$, and
$\Dv$ is the \emph{vertical} category. Composition in $\Dh$ is denoted
by $\rond$, while we use $\vrond$ for $\Dv$. $\D$ is furthermore
equipped with a set of \emph{double cells} $\alpha$, which have
vertical, resp.\ horizontal, domain and codomain, denoted by $\domv
(\alpha)$, $\codv (\alpha)$, $\domh (\alpha)$, and $\codh
(\alpha)$. We picture this as, e.g., $\alpha$ above, where $u = \domh
(\alpha)$, $u' = \codh (\alpha)$, $h = \domv (\alpha)$, and $h' =
\codv (\alpha)$. $\D$ is furthermore equipped with operations for
composing double cells: $\rond$ composes them along a common vertical
morphism, $\vrond$ composes along horizontal morphisms. Both vertical
compositions (of morphisms and double cells) may only be associative
up to coherent isomorphism. The full axiomatisation is given by
Garner~\cite{GarnerPhD}, and we here only mention the
\emph{interchange law}, which says that the two ways of parsing the
above diagram coincide: $(\beta' \rond \beta) \vrond (\alpha' \rond
\alpha) = (\beta' \vrond \alpha') \rond (\beta \vrond \alpha)$.

\begin{example}
Returning to the game, we have seen that positions are the objects of
the category $\Dccsh$, whose morphisms are embeddings of positions.
But positions are also the objects of the bicategory $\Dccsv$, whose
morphisms are plays. \linebreak\noindent
\begin{minipage}[t]{\linewidth}
\vspace*{-.5em}\begin{wrapfigure}[6]{r}{0pt}
  \begin{minipage}[t]{0.2\linewidth}
\vspace*{-.55em} \hspace*{-1em}   \diag|baseline=($(m-1-1.south)$)|{%
      X \& X' \\
      U \& V \\
      Y \& Y' %
    }{%
      (m-1-1) edge[labeld={h}] (m-1-2) %
      (m-2-1) edge[labelo={k}] (m-2-2) %
      (m-3-1) edge[labelu={l}] (m-3-2) %
      (m-1-1) edge[labell={s}] (m-2-1) %
      (m-1-2) edge[labelr={s'}] (m-2-2) %
      (m-3-1) edge[labell={t}] (m-2-1) 
      (m-3-2) edge[labelr={t'}] (m-2-2) %
    }
  \end{minipage}
\end{wrapfigure}
  It should seem natural to define a pseudo double category structure
  with double cells given by commuting diagrams as on the right in
  $\Chat$. Here, $Y$ is the initial position and $X$ is the final one;
  all arrows are mono.  This indeed forms a pseudo double category
  $\Dccs$. Furthermore, for any double category $\D$, let $\DH$ be the
  category with objects all morphisms of $\Dv$, and with morphisms $u
  \to u'$ all double cells $\alpha$ such that $\dom_h(\alpha) = u$ and
  $\cod_h(\alpha) = u'$.  A crucial feature of $\Dccs$ is that the
  canonical functor $\codv \colon \DH \to \Dh$ mapping any such
  $\alpha$ to $\codv(\alpha)$ is a Grothendieck
  fibration~\cite{Jacobs}.  This means that one may canonically
  `restrict' a play, say $u' \colon X' \to Y'$, along a horizontal
  morphism $h' \colon Y \to Y'$, and obtain a universal cell as
  $\alpha$ above, in a suitable sense.
\end{minipage}
\end{example}

\begin{wrapfigure}[4]{r}{0pt}
  \diagramme{}{}{%
    |(d)| d \& |(Y)| Y \\
    |(dMy)| d^{y,M} \& |(X)| X %
  }{%
    (d) edge[labelu={y}] (Y) %
    edge[dashed,twol={v^{y,M}}] (dMy) %
    (Y) edge[twor={M}] (X) %
    (dMy) edge[dashed,labelu={y^M}] (X) %
    (l) edge[cell=.3,dashed,labelu={\scriptstyle \alpha^{y,M}}] (r) %
  }
\end{wrapfigure}
Playgrounds are pseudo double categories with extra data and axioms,
the first of which is that $\codv$ should be a fibration. To give a
brief idea of further axioms, a playground $\D$ is equipped with a set
of objects $\DI$, called \emph{individuals}, which correspond to our
`typical' players above. Let $\Pl(X) = \sum_{d \in \DI} \Dh (d,X)$
denote the set of players of $X$. It also comes with classes $\F$ and
$\B$ of \emph{full}, resp.\ \emph{basic} moves; and every play (i.e.,
vertical morphism) is assumed to admit a decomposition into moves in
$\F \cup \B$ (hence \emph{atomicity}). Basic moves are assumed to have
individuals as both domain and codomain, and \emph{views} are defined
to be composites of basic moves. The crucial axiom for innocence to
behave well assumes that, for any position $Y$ and player $y \colon d
\to Y$, there exists a cell $\alpha^{y,M}$ as above, with $v^{y,M}$ a
view, which is unique up to canonical isomorphism of such.
Intuitively: any player in the final position of a play has an
essentially unique view of the play. A last, sample axiom shows how
some sequentiality is enforced, which is useful to tame the
concurrency observed in~\eqref{eq:play}. It says that any double cell
as in the center below, where $b$ is a basic move and $M$ is any move,
decomposes in exactly one of the forms on the left and right:
  \begin{center}
      \begin{minipage}[t]{0.21\textwidth}
        \Diag{%
          \twocell[.4][.3]{B}{A}{X}{}{celllr={0.0}{0.0},bend
            right=30,labelbr={\alpha_1}} %
          \twocell[.4][.3]{C}{B}{Y}{}{celllr={0.0}{0.0},bend
            right=20,labelbr={\alpha_2}} %
        }{%
          |(A)| A \& |(X)| X \\
          |(B)| B \& |(Y)| Y \\
          |(C)| C \& |(Z)| Z %
        }{%
          (A) edge (B) %
          edge (X) %
          (B) edge (Y) %
          edge (C) %
          (C) edge (Z) %
          (X) edge (Y) %
          (Y) edge 
          (Z) %
        }
      \end{minipage}
      \hfil $\leftsquigarrow$ \hfil
      \begin{minipage}[t]{0.21\textwidth}
        \Diag{%
          \twocellbr{B}{A}{X}{\alpha} %
        }{%
          |(A)| A \& |(X)| X \\
          |(B)| B \& |(Y)| Y \\
          |(C)| C \& |(Z)| Z %
        }{%
          (A) edge[labelu={h}] (X) %
          edge[labell={w}] (B) %
          (B) edge[labell={b}] (C) %
          (X) edge[labelr={u}] (Y) %
          (Y) edge[labelr={M}] (Z) %
          (C) edge[labeld={k}] (Z) %
        }
      \end{minipage}
    \hfil $\rightsquigarrow$ \hfil
      \begin{minipage}[t]{0.21\textwidth}
        \Diag {%
          \twocell[.4][.3]{B}{A}{X}{}{celllr={0.0}{0.0},bend
            right=30,labelbr={\alpha_1}} %
          \twocell[.3][.4]{C}{Y}{Z}{}{celllr={0.0}{0.0},bend
            right=20,labeld={\alpha_2}} %
        }{%
          |(A)| A \& |(X)| X \\
          |(B)| B \& |(Y)| Y \\
          |(C)| C \& |(Z)| Z. %
        }{%
          (A) edge (B) %
          edge (X) %
          (B) 
          edge (C) %
          (C) edge (Z) %
          edge (Y) %
          (X) edge (Y) %
          (Y) 
          edge (Z) %
        }
      \end{minipage}
  \end{center}
  The idea is that, $C$ being an individual, if $M$ has a non-trivial
  restriction to $C$, then $b$ must be one of its views.
  Again, for the formal definition, see~\cite{HirschoDoubleCats}.
  \begin{proposition}
    $\Dccs$ forms a playground (basic moves being the \emph{local} ones).
  \end{proposition}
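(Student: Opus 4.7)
The plan is to verify, piece by piece, that the pseudo double category $\Dccs$ together with its designated classes of moves satisfies each of the axioms of a playground, as summarised in Section~\ref{subsec:playgrounds} and detailed in the long version~\cite{HirschoDoubleCats}. First I would settle the ambient structure: $\Dccsh$ and the bicategory $\Dccsv$ of plays have already been constructed in Section~\ref{subsec:posmovesplays}; double cells are commuting squares of monos in $\Chatf$ as displayed, and pseudo functoriality of vertical composition follows from the standard pseudo double category structure on $\Cospanf{\Chatf}$. I would then single out the distinguished data: individuals $\DI = \{[n] \mid n \in \Nat\}$, basic moves $\B$ as the local moves (fork halves $\paraln,\pararn$, output $\iotaposni$, input $\iotanegni$, synchronisation $\taunimj$, fork $\paran$, tick $\tickn$, restriction $\nun$) viewed as cospans from $[n]$ to the appropriate domain, and full moves $\F$ (globally) as those not of fork-half shape.

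Next I would discharge the `combinatorial' axioms. Atomicity — every play decomposes as a vertical composite of moves in $\F \cup \B$ — is immediate from the definition of plays as composites of global moves and the fact that each global move is either full or a fork half. The requirement that basic moves have individuals as domain and codomain is visible from the picture of each representable: local moves are cospans of the form $[n] \to M \leftarrow [n']$ where both ends are single players. The fibration axiom for $\codv \colon \DccsH \to \Dccsh$ comes from pullback in $\Chatf$: given a play $u' \colon X' \to Y'$ and $h \colon Y \to Y'$, pulling $u'$ back pointwise along $h$ produces a cartesian cell, which on local moves amounts to restricting the cospan of each move to the players that survive in $Y$ — and pushouts/pullbacks of positions and moves live inside $\Chatf$ because positions are closed under the relevant colimits and monic pullbacks.

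The most delicate axioms are the `uniqueness of view' axiom and the sequentiality axiom displayed at the end of the excerpt. For the former, given a player $y \colon [n] \to Y$ and a play $M \colon Y \to X$, I would build $v^{y,M}$ by iterating the move-by-move restriction along $y$: at each move in a chosen decomposition of $M$, use the fibration structure to pull back to the `local history' of $y$, which by construction uses only channels and players reachable by $y$. Because basic moves are the sole moves whose domain is a single player, the restricted cospan at each step is forced to be either a basic move acting on $y$'s current avatar or an identity, yielding a view. Uniqueness up to canonical isomorphism then follows from the universal property of the cartesian lifts and the fact that two decompositions of $M$ into global moves differ only by reordering of concurrent moves (as witnessed in~\eqref{eq:play}), which become identified after restriction to a single player's history. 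For the sequentiality axiom, I would argue by case analysis on the basic move $b$ and on the interaction of $M$ with the image of the player: if the restriction of $M$ to the codomain of $b$ in $B$ is trivial, we land in the left case (the cell factors through the vertical identity on $C$); otherwise, $b$ must be a view of the restriction of $M$, forcing the right-hand decomposition.

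The main obstacle will be the view-uniqueness axiom and any residual axioms (such as the compatibility between basic moves and the fibration structure, and the handling of channel creation $\nun$, whose cospan legs are not both injective on channels in the same way as the others). I expect that these last items reduce, case by case, to explicit computations of pullbacks and pushouts of the seven representable moves along embeddings of positions — routine but voluminous — and that no genuinely new idea is needed beyond the pictures displayed in Section~\ref{subsec:posmovesplays}; the full verification is deferred to~\cite{HirschoDoubleCats}.
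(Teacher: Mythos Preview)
The paper itself gives no proof here: the proposition is stated bare, with both the formal definition of playground and the verification deferred to the long version~\cite{HirschoDoubleCats}. Your outline is therefore already more detailed than what the paper offers, and in spirit it matches --- isolate the pseudo double category structure, name the distinguished data, and then check the fibration, atomicity, view, and sequentiality axioms by case analysis over the finitely many move shapes.

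One concrete slip in your enumeration, though: synchronisation $\taunimj$ and the (full) fork $\paran$ are \emph{not} basic moves. The playground axiom you yourself quote requires basic moves to have individuals as both domain and codomain, but $\taunimj$ has two players in its initial and final positions, and $\paran$ has two players in its final position $[n]\para[n]$. The basic moves are exactly the local moves between single players --- $\paraln$, $\pararn$, $\iotaposni$, $\iotanegni$, $\tickn$, and $\nun$ --- which is precisely the set $\MMMB_n$ indexing the strategy syntax below Theorem~\ref{thm:stratcoalg} and appearing in the translation $\Translfun$. This matters for your view-uniqueness sketch: you argue that restriction to a single player's history produces, at each step, either a basic move or an identity, relying on the observation that basic moves are the sole moves whose domain is an individual. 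With $\paran$ and $\taunimj$ wrongly included among basic moves that claim would be false, and your description of them as ``cospans from $[n]$ to the appropriate domain'' is likewise inaccurate for those two.
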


\subsection{Syntaxes and labelled transition systems}\label{subsec:ltss}
Notions of residuals and restrictions defined above for CCS are easily
generalised to arbitrary playgrounds. They lead to the exact same
syntax as in the concrete case (below
Theorem~\ref{thm:stratcoalg}). They further yield a first, naive
\lts{} over full moves for strategies. The intuition is that there is
a transition $S \xto{M} S'$, for any full move $M$, when $S \cdot M =
S'$. (Residuals $S \cdot M$ are here defined analogously to the case
of basic moves $S \cdot b$ above.) An issue with this \lts{} is that
$S \cdot M$ may have several possible initial states, and we have seen
that it makes more sense to restrict to a single state before taking
residuals.  We thus define our \lts{} $\SSS_\D$ to have as vertices
pairs $(X,S)$ of a position $X$ and a \emph{definite} strategy $S$,
i.e., a strategy with exactly one initial state (formally, $S_{(d,x)}
(\id_d) = 1$ for all $(d,x) \in \Pl (X)$ --- recalling that $\id_d$ is
an (initial) object in $\EVi_d$).  We then say that there is a
transition $(X,S) \xto{M} (X',S')$ for any full move $M \colon X' \to
X$, when $S' = \restr{(S \cdot M)}{\state'}$, for some initial state
$\state'$ of $S \cdot M$.

\begin{example}
  %
  Consider a strategy of the shape $S = \with{\pararn \mapsto
    S_1, \paraln \mapsto S_2, \_ \mapsto \emptyset}$ on $[n]$, with
  definite $S_1$ and $S_2$.  There is a $\paran$ transition to the
  position with two $n$-ary players $x_1$ and $x_2$, equipped with
  $S_1$ and $S_2$, respectively. If now $S_1$ and $S_2$ are not
  definite, any $\paran$ transition has to pick initial states
  $\state_1 \in S_1(\id_{[n]})$ and $\state_2 \in S_2(\id_{[n]})$,
  i.e., $S \xto{\paran}
  [\restr{(S_1)}{\state_1}]\para[\restr{(S_2)}{\state_2}]$.  Here, we
  use a shorthand notation for pairs $(X,S)$, defined as follows.
  First, for any strategy $S$ over $[n]$ and position $X$ with exactly
  one $n$-ary player $x$ and names in $\Gam$, we denote by $\Gam
  \vdash [x:S](a_1,\ldots, a_n)$ the pair $(X,S)$, where $a_i =
  X(s_i)(x)$, for all $i \in n$. If now $X$ has several players, say
  $x_1, \ldots, x_p$, of respective arities $n_1, \ldots, n_p$, and
  $S_1, \ldots, S_p$ are strategies of such arities, we denote by
  $\Gam \vdash [x_1:S_1](a^1_1, \ldots, a^1_{n_1}) \para \ldots \para
  [x_p:S_p](a^p_1, \ldots, a^p_{n_p})$ the pair $(X,
  [S_1,\ldots,S_p])$.  When they are irrelevant, we often omit $\Gam$,
  the $x_j$'s, and the $a^j_i$'s, as in our example.  
\end{example}

\begin{wrapfigure}{r}{0pt}
      \begin{minipage}{.2\textwidth}
        \centering
  \inferrule{ \ldots \ d_x \vdash T_x \ \ldots}{ d \vdash M \with{(T_x)_{x
        \in \Pl(M)}}} 

        \inferrule{%
          \ldots \ d_i \vdash T_i \ \ldots \ (\forall i \in n) %
        }{%
          d \vdash {\textstyle \sum}_{i \in n} M_i.T_i %
        }
    \end{minipage}
\end{wrapfigure}
Beyond the one for strategies, there is another syntax one can derive
from any playground. Instead of relying on basic moves as before, one
now relies on full moves. Thinking of full moves as inference rules
(e.g., in natural deduction), the premises of the rule for any full $M
\colon X \to Y$ should be those players $(d_x,x)$ of $X$ whose view
through $M$ is non-trivial, i.e., is a basic move. We call this set of
players $\Pl(M)$. The natural syntax rule is thus the first one above
(glossing over some details), which defines \emph{process terms} $T$.
We add a further rule for guarded sum allowing to choose between
several moves. One has to be a little careful here, and only allow
moves $M \colon X \to Y$ such that $\Pl(M)$ is a singleton. This
yields the second rule above, where $n \in \Nat$, and $\forall i \in
n$, $M_i$ is such a move and $d_i$ is the arity of the unique element
of $\Pl(M_i)$. Calling $\TT_d$ the set of infinite terms for this
syntax, there is a natural translation map $\translfun \colon \TT_d
\to \SS_d$ to strategies, for all $d \in \DI$, which looks a lot like
$\Translfun$, and \anlts{} $\TTT_\D$, whose vertices are pairs $(X,T)$
of a position $X$, with $T \in \prod_{d,x \in \Pl (X)} \TT_d$. The
main result on playgrounds is
\begin{theorem}\label{thm:bisim}
  The map $\translfun \colon \TTT_\D \to \SSS_\D$ is a functional, strong bisimulation.
\end{theorem}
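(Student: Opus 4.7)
My plan is to prove both directions of the bisimulation condition separately, exploiting the coinductive/coalgebraic characterisation of strategies (Theorem~\ref{thm:stratcoalg}, as generalised to playgrounds). The translation $\translfun \colon \TT_d \to \SS_d$ is defined coinductively so that $\translfun(M\langle(T_x)_x\rangle)$ is the unique definite strategy whose residuals along the basic-move views of $M$ are the $\translfun(T_x)$'s, and $\translfun(\sum_{i \in n} M_i . T_i)$ is the strategy whose set of initial states is $n$ and whose restriction to state $i$ is $\translfun(M_i.T_i)$. Functoriality on positions by putting these together component-wise gives the map $\translfun \colon \ob(\TTT_\D) \to \ob(\SSS_\D)$; that it is genuinely functional is then built into the construction.

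For the forward direction, I fix a transition $(X,T) \xto{M} (X',T')$ in $\TTT_\D$, which by definition picks, for each player $(d_x,x) \in \Pl(X)$ of the source, a branch in the guarded sum $T_x$ compatible with $M$, and assembles the resulting subterms into $T'$. I then have to show that $\restr{(\translfun(X,T) \cdot M)}{\state'} = \translfun(X',T')$ for the corresponding initial state $\state'$ of $\translfun(X,T) \cdot M$. By the decomposition of strategies over players, this reduces to a computation on each player $(d',y)$ of $X'$: I compute the value of the residual on an arbitrary view $v$ out of $d'$, using the playground's basic-vs-full decomposition axiom (the sample axiom stated in Section~\ref{subsec:playgrounds}) to decompose $v \vrond (y \rond M)$ either as a view of a player of $X$ extended by $v$ (fibration case) or as a view of $M$ from $y$ followed by a residual view. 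In both cases, the coinductive definition of $\translfun$ matches, so the two presheaves agree.

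For the backward direction, I fix a transition $\translfun(X,T) \xto{M} (X',S')$ in $\SSS_\D$. I need to produce $T'$ with $(X,T) \xto{M} (X',T')$ and $\translfun(X',T') = S'$. For each player $(d_x,x)$ of $X$ whose view through $M$ is a non-trivial basic move $b$ (\ie $x \in \Pl(M)$), the assumption that $S = \translfun(T)$ has a residual along $b$ corresponding to a state of $S \cdot M$ forces $T_x$ to contain a summand beginning with some full move whose $x$-view is $b$; atomicity and the uniqueness of the basic-vs-full decomposition then pin down that summand (up to the usual canonical iso) as one whose head move is $M$ restricted along $x$. Assembling these choices and applying the coinductive hypothesis to the subterms yields the required $T'$.

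The main obstacle is the bookkeeping in the backward direction: one must show that the local choices of summands at the players in $\Pl(M)$ are globally coherent, \ie that they assemble into a genuine transition $T \xto{M} T'$ and not just a family of player-wise transitions. This is exactly what the playground axioms about existence and uniqueness of views ($\alpha^{y,M}$) and the basic-vs-full decomposition are designed to ensure, but matching the combinatorics between the fibrational structure of $\codv$ and the syntactic structure of $\TT_d$ is where all the real work happens. Once this is done, coinduction on $\TT_d$ closes both directions simultaneously.
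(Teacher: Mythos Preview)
The paper does not contain a proof of this theorem. This is an extended abstract, and Theorem~\ref{thm:bisim} is stated without proof; the paper explicitly defers technical details to the long version~\cite{HirschoDoubleCats}. There is therefore nothing in the present paper to compare your proposal against.

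That said, your plan is broadly along the right lines for how such a result is established: reduce to individual players via the decomposition $\SS_X \iso \prod_{(d,x)} \SS_d$, use the terminal-coalgebra description of $\SS_d$ to unfold $\translfun$ one layer at a time, and invoke the playground axioms (fibration, unique views $\alpha^{y,M}$, the basic/full decomposition) to match the residuals of $\translfun(T)$ along a full move $M$ with the translations of the subterms of $T$. Two points deserve care. First, in the forward direction you write that a transition in $\TTT_\D$ ``picks, for each player $(d_x,x) \in \Pl(X)$, a branch in the guarded sum $T_x$ compatible with $M$''; this should be restricted to players in $\Pl(M)$, since players whose view through $M$ is trivial do not consume a summand. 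Second, your remark that the backward direction's local choices must be shown globally coherent is exactly the crux, but your sketch does not indicate \emph{which} playground axioms do this work beyond the view and decomposition axioms you mention; in the full development this requires the remaining axioms (notably those governing how restrictions of moves along horizontal morphisms interact with the fibration), and simply citing existence/uniqueness of $\alpha^{y,M}$ is not enough to reconstruct a global $T'$ from player-wise data.
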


\subsection{Change of base and main result}\label{subsec:cob}
The \lts{} $\SSS_{\Dccs}$ obtained for $\Dccs$ is much too fine to be
relevant for bisimilarity to make behavioural sense. E.g., the
translations of $a \para b$ and $b \para a$ are not bisimilar. Indeed,
labels, i.e., full moves in $\F_{\Dccs}$, bear the information of
which player is involved in the transition. So both strategies have a
$\paragam$ translation to a position with two $\Gam$-ary players, say
$x_1$ and $x_2$. But then, $a \para b$ has a transition where $x_1$
plays an input on $a$, which $b \para a$ cannot match. Refining the
above notation, and omitting $\Translfun$, we may write the former
transitions as $[a \para b] \xto{\paragam} [a]\para [b]
\xto{x_1,\iotaneg{\Gam,a}} [0]\para [b].$ There is another problem
with this \lts{}, namely that there are undue transitions. E.g., we
have $[\nu a.a] \xto{\nu_0} [a] \xto{\iotaneg{(a),a}} 0$.  The
transition system does not yet take privacy of channels into account.

Let us first rectify the latter deficiency. To this end, we pull back
our \lts{} $\SSS_{\Dccs} \to \F_{\Dccs}$ along a morphism of graphs $\LLL \to
\F_{\Dccs}$ defined as follows. Let $\LLL$ have \emph{interfaced positions} as
vertices, i.e., morphisms $h \colon I \to X$ from an interface to a
position. $I$ specifies the public channels, and hence we let edges $h
\to h'$ be commuting diagrams of the shape~\eqref{eq:imove}, where $M$
may be any full move ($X$ being the final position), except inputs and
outputs on a channel outside the image of $I$.  We then
straightforwardly define $\chi \colon \LLL \to \F_{\Dccs}$ to map $h$ to $X$
and any diagram above to $M$. The pullback $\SSSL_{\Dccs} \to \LLL$ of
$\SSS_{\Dccs}$ along $\chi$ is rid of undue communications on private
channels.

To rectify the other deficiency mentioned above, recalling from
Definition~\ref{defn:A} that $\A$ is the alphabet for CCS, we define a
morphism $\xi \colon \LLL \to \A$ by mapping $(I \to X)$ to its set $I
(\star)$ of channels, and any $M$ to
(1) $\tick$ if $M$ is a tick move,
(2) $\id$ if $M$ is a synchronisation, a fork, or a channel creation,
(3) $a$ if $M$ is an input on $a \in I (\star)$,
(4) $\abar$ if $M$ is an output on $a \in I (\star)$.
(Positions are formally defined as presheaves to $\set$, hence
channels directly form a finite ordinal number.)  It is here crucial
to have restricted attention to $\LLL$ beforehand, otherwise we would
not know what to do with communications on private channels. Let
$\rc{\SSS_{\Dccs}} = \pbang{\xi}{\SSSL_{\Dccs}}$ be the post-composition
of $\SSSL_{\Dccs} \to \LLL$ with $\xi$.

The obtained \lts{} $\SSSAccs \to \A$ is now ready for our purposes.
Proceeding similarly for the \lts{} $\TTT_{\Dccs}$ of process terms,
we obtain a strong, functional bisimulation $\translfun \colon
\ob(\TTTAccs) \to \ob (\SSSAccs)$ over $\A$. We then prove that
$\theta \colon \ob(\ccs) \into \ob (\TTTAccs)$ is included in weak
bisimilarity over $\A$, and, easily, that $\Translfun = \translfun
\rond \theta$. 
\begin{corollary}\label{cor:wbisim}
  For all $P$, $P \wbisim_\A \Transl{P}$.
\end{corollary}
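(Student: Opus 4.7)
The plan is to combine three ingredients, all of which are already in hand by the time the corollary is stated: (i) the embedding $\theta \colon \ob(\ccs) \into \ob(\TTTAccs)$ is asserted to be included in weak bisimilarity over $\A$, so its graph $\{(P,\theta(P)) \mid P \in \ob(\ccs)\}$ lies in $\wbisim_\A$; (ii) the map $\translfun \colon \TTTAccs \to \SSSAccs$ is a strong, functional bisimulation over $\A$, by Theorem~\ref{thm:bisim} transported along the change of base to $\A$; and (iii) the equation $\Translfun = \translfun \rond \theta$ on objects, which follows by unfolding the translations defined at the end of Section~\ref{subsec:syntax} and the construction of $\TTT_\D$.

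Concretely, I would proceed as follows. First, regard $\wbisim_\A$ as a relation on the disjoint union of all reflexive graphs over $\A$, so that states of $\ccs$, $\TTTAccs$ and $\SSSAccs$ may all be compared. Then, from (ii), the graph of $\translfun$ is a strong bisimulation over $\A$; in particular for every term $T \in \ob(\TTTAccs)$ one has $T \bisim_\A \translfun(T)$, hence $T \wbisim_\A \translfun(T)$. Applied to $T = \theta(P)$, and using (iii), this yields $\theta(P) \wbisim_\A \Transl{P}$. Combined with (i), which gives $P \wbisim_\A \theta(P)$, transitivity of $\wbisim_\A$ delivers $P \wbisim_\A \Transl{P}$, as required.

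There is essentially no obstacle at this stage: the corollary is a two-line consequence of the three facts listed above. The genuine work has already been done, and lives in (i), the statement that $\theta$ is included in weak bisimilarity — this is where one actually has to compare CCS's standard SOS transitions with those induced on $\TTTAccs$ by the playground machinery, handling silent moves (forks, channel creations, synchronisations mapped to $\id$ by $\xi$) and matching them up to weak bisimilarity. That argument, however, is carried out prior to the corollary; here one only needs to chain the three ingredients, which I would present as a short displayed calculation
\[
P \ \wbisim_\A\ \theta(P) \ \bisim_\A\ \translfun(\theta(P)) \ =\ \Transl{P},
\]
concluding by transitivity of $\wbisim_\A$.
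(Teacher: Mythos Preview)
Your proposal is correct and matches the paper's approach exactly: the corollary is stated immediately after the three ingredients (i)--(iii) you list, and the paper treats it as their direct consequence without further argument. Your displayed chain $P \wbisim_\A \theta(P) \bisim_\A \translfun(\theta(P)) = \Transl{P}$ is precisely the intended reasoning.
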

Furthermore, we prove that $\faireq$ coincides with the standard,
\lts{}-based definition of fair testing, i.e., $P \faireqs Q$ iff for
all sensible $T$, $(P\para T \in \bot_s) \Leftrightarrow (Q\para T
\in \bot_s)$, where $P \in \bot_s$ iff any $\tick$-free reduction
sequence $P \Rightarrow P'$ extends to one with $\tick$. 
To obtain our main
result, we finally generalise an observation of Rensink and
Vogler~\cite{DBLP:journals/iandc/RensinkV07}, which essentially says
that for fair testing equivalence in CCS, it is sufficient to consider
a certain class of tree-like tests, called \emph{failures}. We first slightly
generalise the abstract setting of De Nicola and
Hennessy~\cite{DBLP:journals/tcs/NicolaH84} for testing equivalences,
e.g., to accomodate the fact that strategies are indexed over
interfaces.  This yields a notion of \emph{effective graph}.  We then
show that, for any effective graph $G$ over an alphabet $A$, the
result on failures goes through, provided $G$ \emph{has enough
  $A$-trees}, in the sense that, up to mild conditions, for any tree
$t$ over $A$, there exists $x \in G$ such that $x \wbisima t$.
Consequently, for any relation $R \colon G \modto G'$ between two such
effective graphs with enough $A$-trees, if $R$ is included in weak
bisimilarity over $A$, then $R$ preserves and reflects fair testing
equivalence. We thus obtain our main result:


\begin{theorem}\label{thm:final}
  For any $\Gam \in \Nat$, let $I_\Gam$ be the interface consisting of
  $\Gam$ channels, and $h_\Gam \colon I_\Gam \to [\Gam]$ be the
  canonical inclusion.  For any CCS processes $P$ and $Q$ over $\Gam$,
  we have $P \faireqs Q$ iff $(I_\Gam,h_\Gam,\Transl{P}) \faireq
  (I_\Gam,h_\Gam,\Transl{Q})$.
\end{theorem}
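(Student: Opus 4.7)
The plan is to assemble all the machinery developed above: the strong bisimulation $\translfun \colon \TTTAccs \to \SSSAccs$ over $\A$, the inclusion $\theta \colon \ob(\ccs) \into \ob(\TTTAccs)$, Corollary~\ref{cor:wbisim} giving $P \wbisim_\A \Transl{P}$, the characterisation of $\faireq$ in terms of the standard, \lts{}-based $\faireqs$, and the generalised Rensink--Vogler result that weak bisimilarity preserves and reflects fair testing between effective graphs with enough $A$-trees. The final step is then essentially formal.

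First I would set up the two effective graphs over $\A$ relevant to the statement: on one side, $\ccs$ itself, viewed as an effective graph via its standard \lts{}; on the other, $\SSSAccs$, whose vertices are pairs $(h, S)$ of an interfaced position and a definite strategy on it. Corollary~\ref{cor:wbisim}, together with the equality $\Translfun = \translfun \rond \theta$, produces a relation $R \colon \ccs \modto \SSSAccs$ sending $P$ to $(h_\Gam, \Transl{P})$, which is contained in weak bisimilarity $\wbisim_\A$. The strategy is then to invoke the abstract result: if both $\ccs$ and $\SSSAccs$ have enough $A$-trees, any $R \subseteq \wbisim_\A$ preserves and reflects $\faireqs$, and combining this with the fact that $\faireq$ agrees with $\faireqs$ on $\SSSAccs$ immediately yields the theorem.

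The bulk of the work therefore lies in verifying the two \emph{enough $A$-trees} conditions. For $\ccs$ this is standard: any finitely-branching tree $t$ over $\A$ is (up to weak bisimilarity) realised by an infinite CCS term built coinductively from the grammar, using guarded sums to encode branching and $\tick$/$\abar$/$a$ for labels. For $\SSSAccs$, the idea is to use the syntactic description of strategies from Section~\ref{subsec:syntax} in order to mimic any $A$-tree by an interfaced position $h_\Gam \colon I_\Gam \to [\Gam]$ carrying a coinductively built strategy whose basic-move branches correspond to the transitions of the tree; the functional bisimulation $\translfun$ and the translation of the mimicking process term give the required weak bisimilarity.

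The main obstacle I anticipate is precisely showing that $\SSSAccs$ has enough $A$-trees, since one must reconcile the fact that transitions in $\SSSAccs$ carry a lot of structural information (which player moves, on which channel, with associated reshuffling of the interface) with the much coarser alphabet $\A$. Concretely, I would argue that for any $A$-tree $t$ over a context $\Gam$ one may build a term $T \in \TT_{d_\Gam}$ that simulates $t$ step by step: silent $\tau$-transitions in $t$ are matched by a well-chosen combination of $\paran$, $\nun$ and $\taunimj$ moves (mapped to $\id$ by $\xi$), while visible labels are matched by the corresponding input/output/$\tick$ moves. The strong bisimulation of Theorem~\ref{thm:bisim}, pulled back along $\chi$ and post-composed with $\xi$, then lifts this to $\SSSAccs$ and provides the required $x \wbisim_\A t$. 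Once this is in place, the theorem follows by direct application of the abstract preservation/reflection result to $R$.
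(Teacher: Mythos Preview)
Your proposal is correct and follows essentially the same route as the paper: assemble Corollary~\ref{cor:wbisim}, the identification of $\faireq$ with the \lts{}-based fair testing on $\SSSAccs$, and the generalised Rensink--Vogler criterion (enough $\A$-trees on both sides implies that any relation contained in $\wbisim_\A$ preserves and reflects fair testing).

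One remark: you flag ``$\SSSAccs$ has enough $\A$-trees'' as the main obstacle and propose to build realisers directly in $\TT_d$ using combinations of $\paran$, $\nun$, $\taunimj$, etc. This works, but it is more laborious than necessary. Once you know that $\ccs$ has enough $\A$-trees and that $\Transl{P} \wbisim_\A P$ for every $P$ (Corollary~\ref{cor:wbisim}), the condition for $\SSSAccs$ is immediate: any $\A$-tree $t$ is realised by some CCS process $P$, and then $(h_\Gam,\Transl{P}) \wbisim_\A P \wbisim_\A t$. So the genuine work sits entirely on the $\ccs$ side and in checking the ``mild conditions'' making both graphs effective, not in a fresh construction inside $\SSSAccs$.
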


\begin{remark}
  Until now, we have considered arbitrary, infinite CCS processes. Let
  us now restrict ourselves to recursive processes (e.g., in the sense
  of HP). We obviously still have that $\Transl{P} \faireq \Transl{Q}$
  implies $P \faireqs Q$.  The converse is less obvious and may be
  stated in very simple terms: suppose you have two recursive CCS
  processes $P$ and $Q$ and a test process $T$, possibly
  non-recursive, distinguishing $P$ from $Q$; is there any
  recursive $T'$ also distinguishing $P$ from $Q$?
\end{remark}

\bibsettings{\tinybibsettings}
\bibliography{../common/bib}

\begin{thebibliography}{10}
\expandafter\ifx\csname href\endcsname\relax
  \def\href#1#2{#2}\fi

\bibitem{DBLP:conf/wollic/Ahrens12}
B.~Ahrens.
\newblock Initiality for typed syntax and semantics.
\newblock In C.-H.~L. Ong, R.~J. G.~B. de~Queiroz, eds., {\em WoLLIC}, vol.
  7456 of {\em Lecture Notes in Computer Science}. Springer, 2012.

\bibitem{DBLP:conf/fossacs/BonsangueRS09}
M.~M. Bonsangue, J.~J. M.~M. Rutten, A.~Silva.
\newblock A {K}leene theorem for polynomial coalgebras.
\newblock In L.~de~Alfaro, ed., {\em FOSSACS}, vol. 5504 of {\em Lecture Notes
  in Computer Science}. Springer, 2009.

\bibitem{DBLP:conf/concur/BrinksmaRV95}
E.~Brinksma, A.~Rensink, W.~Vogler.
\newblock Fair testing.
\newblock In I.~Lee, S.~A. Smolka, eds., {\em CONCUR}, vol. 962 of {\em Lecture
  Notes in Computer Science}. Springer, 1995.

\bibitem{Bruni:ccdc}
R.~Bruni, U.~Montanari.
\newblock Cartesian closed double categories, their lambda-notation, and the
  pi-calculus.
\newblock In {\em LICS~'99}. IEEE Computer Society, 1999.

\bibitem{DBLP:journals/corr/abs-0904-2340}
D.~Cacciagrano, F.~Corradini, C.~Palamidessi.
\newblock Explicit fairness in testing semantics.
\newblock {\em Logical Methods in Computer Science}, 5(2), 2009.

\bibitem{DBLP:conf/csl/CattaniW96}
G.~L. Cattani, G.~Winskel.
\newblock Presheaf models for concurrency.
\newblock In D.~van Dalen, M.~Bezem, eds., {\em CSL}, vol. 1258 of {\em Lecture
  Notes in Computer Science}. Springer, 1996.

\bibitem{DBLP:journals/tcs/NicolaH84}
R.~{De Nicola}, M.~Hennessy.
\newblock Testing equivalences for processes.
\newblock {\em Theor. Comput. Sci.}, 34, 1984.

\bibitem{DBLP:journals/tcs/DelormeMOT11}
M.~Delorme, J.~Mazoyer, N.~Ollinger, G.~Theyssier.
\newblock Bulking {I}: An abstract theory of bulking.
\newblock {\em Theoretical Computer Science}, 412(30), 2011.

\bibitem{Ehresmann:double2}
C.~Ehresmann.
\newblock {\em Cat\'egories et structures}.
\newblock Dunod, 1965.

\bibitem{DBLP:conf/birthday/GadducciM00}
F.~Gadducci, U.~Montanari.
\newblock The tile model.
\newblock In G.~D. Plotkin, C.~Stirling, M.~Tofte, eds., {\em Proof, Language,
  and Interaction}. The MIT Press, 2000.

\bibitem{GarnerPhD}
R.~Garner.
\newblock {\em Polycategories}.
\newblock PhD thesis, University of Cambridge, 2006.

\bibitem{DBLP:journals/mscs/Girard01}
J.-Y. Girard.
\newblock Locus solum: From the rules of logic to the logic of rules.
\newblock {\em Mathematical Structures in Computer Science}, 11(3), 2001.

\bibitem{GrandisPare}
M.~Grandis, R.~Pare.
\newblock Limits in double categories.
\newblock {\em Cahiers de Topologie et G\'eom\'etrie Diff\'erentielle
  Cat\'egoriques}, 40(3), 1999.

\bibitem{DBLP:conf/lics/HarmerHM07}
R.~Harmer, M.~Hyland, P.-A. Melli{\`e}s.
\newblock Categorical combinatorics for innocent strategies.
\newblock In {\em LICS}. IEEE Computer Society, 2007.

\bibitem{DBLP:conf/lics/HarmerM99}
R.~Harmer, G.~McCusker.
\newblock A fully abstract game semantics for finite nondeterminism.
\newblock In {\em LICS~'99}, 1999.

\bibitem{DBLP:journals/tcs/Hildebrandt03}
T.~T. Hildebrandt.
\newblock Towards categorical models for fairness: fully abstract presheaf
  semantics of {SCCS} with finite delay.
\newblock {\em Theoretical Computer Science}, 294(1/2), 2003.

\bibitem{HIRSCHOWITZ:2010:HAL-00540205:2}
T.~{H}irschowitz.
\newblock \href{http://hal.archives-ouvertes.fr/hal-00540205/en/}{{C}artesian
  closed 2-categories and permutation equivalence in higher-order rewriting}.
\newblock Preprint, 2010.

\bibitem{HirschoDoubleCats}
T.~Hirschowitz.
\newblock Full abstraction for fair testing in {CCS}.
\newblock Draft available from
  \url{http://lama.univ-savoie.fr/\~{}hirschowitz}, 2012.

\bibitem{2011arXiv1109.4356H}
T.~Hirschowitz, D.~Pous.
\newblock Innocent strategies as presheaves and interactive equivalences for
  {CCS}.
\newblock {\em Scientific Annals of Computer Science}, 22(1), 2012.
\newblock Selected papers from ICE '11.

\bibitem{DBLP:journals/iandc/HylandO00}
J.~M.~E. Hyland, C.-H.~L. Ong.
\newblock On full abstraction for {PCF}: {I}, {II}, and {III}.
\newblock {\em Inf. Comput.}, 163(2), 2000.

\bibitem{Jacobs}
B.~Jacobs.
\newblock {\em Categorical Logic and Type Theory}.
\newblock Number 141 in Studies in Logic and the Foundations of Mathematics.
  North Holland, Amsterdam, 1999.

\bibitem{DBLP:conf/lics/JoyalNW93}
A.~Joyal, M.~Nielsen, G.~Winskel.
\newblock Bisimulation and open maps.
\newblock In {\em LICS~'93}. IEEE Computer Society, 1993.

\bibitem{Kock01012011}
J.~Kock.
\newblock Polynomial functors and trees.
\newblock {\em International Mathematics Research Notices}, 2011(3), 2011.

\bibitem{MacLane:cwm}
S.~{Mac Lane}.
\newblock {\em Categories for the Working Mathematician}.
\newblock Number~5 in Graduate Texts in Mathematics. Springer, 2nd edition,
  1998.

\bibitem{MM}
S.~MacLane, I.~Moerdijk.
\newblock {\em Sheaves in Geometry and Logic: A First Introduction to Topos
  Theory}.
\newblock Universitext. Springer, 1992.

\bibitem{Mellies04}
P.-A. Melli{\`e}s.
\newblock Asynchronous games 2: the true concurrency of innocence.
\newblock In {\em Proc.\ CONCUR~'04}, vol. 3170 of {\em LNCS}. Springer Verlag,
  2004.

\bibitem{DBLP:conf/lics/Mellies12}
P.-A. Melli{\`e}s.
\newblock Game semantics in string diagrams.
\newblock In {\em LICS}. IEEE, 2012.

\bibitem{Milner80}
R.~Milner.
\newblock {\em A Calculus of Communicating Systems}, vol.~92 of {\em LNCS}.
\newblock Springer, 1980.

\bibitem{Milner89}
R.~Milner.
\newblock {\em Communication and Concurrency}.
\newblock Prentice-Hall, 1989.

\bibitem{DBLP:conf/icalp/NatarajanC95}
V.~Natarajan, R.~Cleaveland.
\newblock Divergence and fair testing.
\newblock In Z.~F{\"u}l{\"o}p, F.~G{\'e}cseg, eds., {\em ICALP}, vol. 944 of
  {\em Lecture Notes in Computer Science}. Springer, 1995.

\bibitem{DBLP:conf/lics/Nipkow91}
T.~Nipkow.
\newblock Higher-order critical pairs.
\newblock In {\em LICS~'91}. IEEE Computer Society, 1991.

\bibitem{PlotkinSOS}
G.~D. Plotkin.
\newblock A structural approach to operational semantics.
\newblock DAIMI Report FN-19, Computer Science Department, Aarhus University,
  1981.

\bibitem{DBLP:journals/iandc/RensinkV07}
A.~Rensink, W.~Vogler.
\newblock Fair testing.
\newblock {\em Inf. Comput.}, 205(2), 2007.

\bibitem{RideauW}
S.~Rideau, G.~Winskel.
\newblock Concurrent strategies.
\newblock In {\em LICS~'11}. IEEE Computer Society, 2011.

\bibitem{DBLP:conf/concur/Sewell98}
P.~Sewell.
\newblock From rewrite rules to bisimulation congruences.
\newblock In D.~Sangiorgi, R.~de~Simone, eds., {\em CONCUR}, vol. 1466 of {\em
  Lecture Notes in Computer Science}. Springer, 1998.

\bibitem{plotkin:turi:bialgebraic}
D.~Turi, G.~D. Plotkin.
\newblock Towards a mathematical operational semantics.
\newblock In {\em LICS~'97}, 1997.

\end{thebibliography}
\bibliographystyle{jbwc}

\end{document}